\DeclareMathOperator{\diag}{diag}
\DeclareMathOperator{\const}{const}
\newtheorem{theorem}{Theorem}
\newtheorem{lemma}{Lemma}
\theoremstyle{definition}
\newtheorem{cnv}{Convention}
\theoremstyle{remark}
\newtheorem{remark}{Remark}
\newtheorem{example}{Example}
\author{I.G. Korepanov}
\title{Multiplicative expression for the coefficient in fermionic 3--3 relation}
\date{March 2015}
\begin{document}

\sloppy

\maketitle

\begin{abstract}
Recently, a family of fermionic relations were discovered corresponding to Pachner move 3--3 and parameterized by complex-valued 2-cocycles, where the weight of a pentachoron (4-simplex) is a Grassmann--Gaussian exponent. Here, the proportionality coefficient between Berezin integrals in the l.h.s.\ and r.h.s.\ of such relations is written in a form multiplicative over simplices.
\end{abstract}

\tableofcontents

\section{Introduction}\label{s:i}

This paper continues the series of papers \cite{K_1301}, \cite{KS2} and~\cite{full-nonlinear}. The reader is referred especially to~\cite{full-nonlinear} for definitions and facts that are only briefly mentioned here. Also, the reader is referred to~\cite{B} for a concise exposition of Grassmann--Berezin calculus of anticommuting variables (or to~\cite{B-super} for a more modern and detailed exposition), and to~\cite{Lickorish} for a pedagogical introduction to Pachner moves.

In paper~\cite{KS2}, a large family was discovered of Grassmann--Gaussian relations corresponding to Pachner move 3--3, with pentachoron (4-simplex) weights depending on a single Grassmann variable attached to each 3-face. In paper~\cite{full-nonlinear}, a full parameterization was given for (a Zariski open set of) such relations, in terms of a 2-cocycle given on both l.h.s.\ and r.h.s.\ of the Pachner move. Many questions still remain, however, to be solved before we arrive at a full-fledged four-dimen\-sional topological quantum field theory on piecewise-linear manifolds.

In the present paper, we solve one of such questions, and show that the answer is remarkably nontrivial. It consists in finding the coefficient called `$\const$' in~\cite[formula~(53)]{full-nonlinear} (as well as~\cite[formula~(6)]{KS2}) in a form that would make possible further construction of a manifold invariant. Namely, the coefficient should be represented as a ratio, $\const=c_r/c_l$ (compare relation~\eqref{33} below), of two expressions belonging to the two sides of the move, and each of these must be \emph{multiplicative} --- have the form of a product over simplices belonging to the corresponding side. This was the case in an earlier paper~\cite{K_1301}, see formula~(1) and Theorem~1 there, also reproduced in~\cite[Section~6]{KS2}, although the 3--3 relations in these papers must be regarded as degenerate from the viewpoint of the present paper. This was also the case in \cite[formula~(38)]{K_1105} and~\cite[formula~(12)]{K_1201}, where different but similar relations were considered.

\subsection{PL manifold invariants and Pachner moves}

In order to construct invariants of piecewise linear (PL) manifolds, it makes sense to construct algebraic relations corresponding to \emph{Pachner moves}, see, for instance, \cite[Section~1]{Lickorish}. Pachner's theorem states that a triangulation of a PL manifold can be transformed into any other triangulation using a finite sequence of these moves~\cite{Pachner}, so there is a hope to pass then from such relations to some quantities characterizing the whole manifold.

In the four-dimensional case, the Pachner moves are 3--3, 2--4 and 1--5. The first of them is usually regarded as `central', and we will be dealing with it in this paper. Here we describe this move and fix notations for the involved vertices and simplices.

Let there be a cluster of three pentachora (4-simplices) 12345, 12346 and~12356 situated around the 2-face~123. Move 3--3 transforms it into the cluster of three other pentachora, 12456, 13456 and~23456, situated around the 2-face~456. The inner 3-faces (tetrahedra) are 1234, 1235 and~1236 in the l.h.s., and 1456, 2456 and~3456 in the r.h.s. The boundary of both sides consists of nine tetrahedra.

Note that we have listed in the previous paragraph exactly \emph{all} simplices in which the l.h.s.\ of move 3--3 differs from its r.h.s. And the boundary of both sides is, of course, the same, it consists of nine tetrahedra.

\subsection{Discrete field theory}

Our relation corresponding to Pachner move 3--3 (also appearing in (practically) this general form in \cite{K_1301,KS2} and~\cite{full-nonlinear}), is
\begin{multline}\label{33}
c_l \iiint \mathcal W_{12345} \mathcal W_{12346} \mathcal W_{12356} \,\mathrm dx_{1234} \,\mathrm dx_{1235} \,\mathrm dx_{1236} \\
 = c_r \iiint \mathcal W_{12456} \mathcal W_{13456} \mathcal W_{23456} \,\mathrm dx_{1456} \,\mathrm dx_{2456} \,\mathrm dx_{3456}.
\end{multline}
Here the integrals are Berezin integrals~\cite{B,B-super} in Grassmann (anticommuting) variables, and $\mathcal W_{ijklm}$ are Grassmann--Gaussian pentachoron weights explained below.

\subsection{The results of this paper, and how they are explained}

The results are explicit formulas for everything in~\eqref{33}: Grassmann weights~$\mathcal W_{ijklm}$ and coefficients $c_l$ and~$c_r$ --- in terms of a 2-cocycle~$\omega$, in accordance with~\cite{full-nonlinear}. As all formulas are algebraic, the author might have presented just these formulas, saying: and now the validity of~\eqref{33} can be checked using computer algebra. The formulas look, however, rather intricate, so we follow another way, focusing on the actual author's reasonings.

\section{Explicit formulas for matrix elements}\label{s:e}

In this Section, as well as in the next Sections \ref{s:d} and~\ref{s:v}, we work within a single pentachoron $u=12345$. The changes to be made for other~$u$ are quite simple and will be explained later.

\begin{cnv}\label{cnv:n}
We denote triangles (2-simplices) by the letter~$s$, tetrahedra (3-simplices) by~$t$, and pentachora (4-simplices) --- by~$u$. As for edges (1-simplices), we tend to use the letter~$b$ for them, while vertices (0-simplices) are $i,j,k,\dots$.
\end{cnv}

\begin{cnv}\label{cnv:o}
We also write the simplices by their vertices, e.g., $s=ijk$ or, as we have written above, $u=12345$. The vertices are thus given by their numbers, and in writing so, we assume by default that the vertices are \emph{ordered}: $i<j<k$, etc. If, however, we need a triangle whose order of vertices in unknown or unessential, we write $s$ as~$\{ijk\}$, as in Lemma~\ref{l:s} below.
\end{cnv}

\subsection{Edge operators}

Our Grassmann--Gaussian pentachoron weight is
\[
\mathcal W_u = \mathcal W_{12345} = \exp\left(-\frac{1}{2} \,\mathsf x^{\mathrm T} F\/ \mathsf x\right),
\]
where
\begin{equation}\label{x}
\mathsf x = \begin{pmatrix} x_{2345} & x_{1345} & x_{1245} & x_{1235} & x_{1234} \end{pmatrix}^{\mathrm T}
\end{equation}
is the column of Grassmann variables corresponding to the 3-faces $t\subset u$, and $F$ --- a $5\times 5$ antisymmetric matrix.

We are going to recall the construction of matrix~$F$ from~\cite{full-nonlinear}. Moreover, we will write down some specific explicit expressions for the entries of~$F$ that do not appear in~\cite{full-nonlinear}. On the other hand, we will skip some details for which the reader is referred to~\cite{full-nonlinear}.

Our starting point is a \emph{2-cocycle}~$\omega$: it takes complex values $\omega_s=\omega_{ijk}$ on triangles~$s=ijk\subset u$ such that
\begin{equation}\label{c}
\omega_{jkl}-\omega_{ikl}+\omega_{ijl}-\omega_{ijk}=0.
\end{equation}
Then, there are \emph{edge operators}~$d_b$ for the ten edges $b=ij\subset u$ that make the bridge between $\omega$ and matrix~$F$. Edge operators have the following properties:
\begin{itemize}\itemsep 0pt
 \item they belong to the 10-dimensional space of operators
  \begin{equation}\label{bpgx}
  d=\sum_{t\subset u} (\beta_t \partial_t+\gamma_t x_t),
  \end{equation}
 \item more specifically, the sum~\eqref{bpgx} for a given~$d_b$ runs \emph{only over such three tetrahedra~$t$ that $t\supset b$},
 \item each of them annihilates the pentachoron weight:
  \[
  d_b \mathcal W_u = 0,
  \]
 \item they are antisymmetric with respect to changing the edge orientation:
  \begin{equation}\label{da}
  d_{ij}=-d_{ji},
  \end{equation}
 \item they obey the following linear relations for each vertex $i\in u$:
  \begin{equation}\label{dv}
  \sum_{\substack{j\in u\\ j\ne i}} d_{ij} = 0,
  \end{equation}
 \item and there is one more linear relation:
  \begin{equation}\label{5u}
  \sum_{b\subset u} \nu_b d_b = 0,
  \end{equation}
  where $\nu$ is any 1-cocycle such that $\omega$ makes its coboundary:
  \[
  \omega=\delta \nu,\quad \text{i.e.,}\quad \omega_{ijk}=\nu_{jk}-\nu_{ik}+\nu_{ij},
  \]
 \item they form a maximal (5-dimensional) \emph{isotropic} subspace in the (10-dimensional) space of all operators of the form~\eqref{bpgx}, where the scalar product is, by definition, the anticommutator:
  \[
  \langle d', d'' \rangle = [d',d'']_+ = d'd''+d''d'.
  \]
\end{itemize}

\subsection{Partial scalar products of edge operators}

Due to the form~\eqref{bpgx}, we have \emph{$t$-components}
\[
d_b|_t=\beta_t \partial_t+\gamma_t x_t
\]
of edge operators, and the (vanishing) scalar product of two edge operators is a sum over tetrahedra:
\[
0 = \langle d_{b_1},d_{b_2} \rangle = \sum_{t\subset u} \langle d_{b_1},d_{b_2} \rangle_t ,
\]
where $\langle d_{b_1},d_{b_2} \rangle_t$ --- we call it the \emph{partial scalar product} of $d_{b_1}$ and~$d_{b_2}$ with respect to tetrahedron~$t$ --- is by definition the same as $\langle d_{b_1}|_t,d_{b_2}|_t \rangle$.

\begin{lemma}\label{l:s}
Choose a tetrahedron $t\subset u$ and a triangle $\{ijk\}\subset t$ (see Convention~\ref{cnv:o} for this notation). Then the partial scalar product $\langle d_{ij},d_{ik} \rangle_t$ remains the same under any permutation of $i,j,k$.
\end{lemma}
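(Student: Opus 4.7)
The plan is to reduce the $S_3$-invariance to a single vanishing of a partial scalar product, and then to extract that vanishing from the global isotropy condition using the fact that two disjoint edges of a pentachoron lie in a unique common tetrahedron.

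To begin, I would unpack the partial scalar product. Restricted to $t$, each edge operator $d_b|_t=\beta_t^b\partial_t+\gamma_t^b x_t$ lives in a two-dimensional space, and a short computation gives
\[
\langle d_{b_1},d_{b_2}\rangle_t = \beta_t^{b_1}\gamma_t^{b_2}+\gamma_t^{b_1}\beta_t^{b_2},
\]
a scalar symmetric in $b_1,b_2$. Hence the transposition $(jk)$ acts trivially on $\langle d_{ij},d_{ik}\rangle_t$, and since $S_3$ is generated by $(jk)$ together with $(ij)$, it suffices to check invariance under $(ij)$. By the antisymmetry \eqref{da}, this amounts to the single vanishing
\[
\langle d_{ij},\, d_{ik}+d_{jk}\rangle_t=0.
\]

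Next, I would use the vertex relation \eqref{dv} at $k$ to rewrite the second argument. Writing $t=ijkl$ and letting $m$ denote the fifth vertex of $u$, the relation $\sum_{n\ne k} d_{kn}=0$, restricted to $t$, loses its $d_{km}$-term (since $m\notin t$ forces $d_{km}|_t=0$) and collapses to the identity saying that $d_{ik}|_t+d_{jk}|_t$ equals, up to a sign, the restriction to $t$ of the edge-$kl$ operator. Substitution reduces the whole task to $\langle d_{ij},d_{kl}\rangle_t=0$.

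Finally, I would invoke isotropy. Because $ij$ and $kl$ share no vertex, $t=ijkl$ is the \emph{unique} tetrahedron of $u$ containing both edges, so the local product $\langle d_{ij},d_{kl}\rangle_t$ coincides with the full sum $\langle d_{ij},d_{kl}\rangle$, which vanishes by the isotropy of the edge-operator subspace. The main conceptual point---more a feature of the pentachoron than an obstacle---is precisely this ``disjoint edges share exactly one tetrahedron'' observation: it turns global isotropy into the required local vanishing.
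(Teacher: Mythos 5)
Your proposal is correct and follows essentially the same route as the paper: the symmetry of the anticommutator handles the transposition fixing the initial vertex, and for the remaining transposition one restricts the vertex relation \eqref{dv} at $k$ to $t$ (the edge leaving $t$ drops out) and then uses that $ij$ and the opposite edge $kl$ share only the tetrahedron $t$, so isotropy gives $\langle d_{ij},d_{kl}\rangle_t=\langle d_{ij},d_{kl}\rangle=0$. The paper presents this for the representative case $\langle d_{12},d_{13}\rangle_{1234}=\langle d_{21},d_{23}\rangle_{1234}$; you carry out the identical argument in general notation.
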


\begin{proof}
Let us prove, for instance, that
\begin{equation}\label{1213}
\langle d_{12},d_{13}\rangle_{1234} = \langle d_{21},d_{23}\rangle_{1234}.
\end{equation}
Setting $i=3$ in~\eqref{dv} and taking its $t$-component, we have (keeping in mind also~\eqref{da}):
\begin{equation}\label{d3}
-d_{13}|_{1234}-d_{23}|_{1234}+d_{34}|_{1234}=0.
\end{equation}
We want to take the scalar product of~\eqref{d3} with~$d_{12}$. As $1234$ is the only tetrahedron common for the edges $12$ and~$34$, and all edge operators are orthogonal to each other, we get
\begin{equation}\label{ortho}
\langle d_{12},d_{34}\rangle_{1234}=\langle d_{12},d_{34}\rangle=0.
\end{equation}
So, the mentioned scalar product, together with~\eqref{ortho}, gives~\eqref{1213} at once.
\end{proof}

\begin{lemma}\label{l:t}
For a tetrahedron $t\subset u$, construct the expression
\begin{equation}\label{s-gen}
\omega_s\langle d_{b_1},d_{b_2} \rangle_t.
\end{equation}
Here tetrahedron~$t$ is considered as oriented, $s$ is any of its 2-faces with the induced orientation, and $b_1,b_2\subset s$ are two edges sharing the same initial vertex (thus also oriented). Then, the expression~\eqref{s-gen} does not depend on a specific choice of $s$, $b_1$ and~$b_2$, and thus pertains solely to~$t$.
\end{lemma}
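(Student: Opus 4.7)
The plan is to exploit the orthogonality of the edge operators together with the two families of linear relations~\eqref{dv} and~\eqref{5u}. By Lemma~\ref{l:s}, the partial scalar product $\langle d_{b_1},d_{b_2}\rangle_t$ depends only on the unoriented 2-face $s=\{ijk\}$ containing $b_1,b_2$, so denote its common value by~$X_s$. The independence of~\eqref{s-gen} on $b_1,b_2$ is then immediate, and the remaining claim is that $\omega_s X_s$, interpreted with $s$ carrying the induced orientation from~$t$, takes the same value for all four 2-faces $s\subset t$.

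As a preparatory identity, I would first fix an edge $b=ij\subset t$ with $\{i,j,k,l\}=t$, take the $t$-component of the vertex-$i$ relation~\eqref{dv}, namely $d_{ij}|_t+d_{ik}|_t+d_{il}|_t=0$, and pair it with $d_{ij}|_t$ via the partial scalar product. Setting $Y_b^{(t)}=\langle d_b,d_b\rangle_t$ and invoking Lemma~\ref{l:s}, this yields
\[
Y_{ij}^{(t)} + X_{\{ijk\}} + X_{\{ijl\}} = 0.
\]

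For the main step, fix any 1-cochain $\nu$ with $\delta\nu=\omega$; the $t$-component of~\eqref{5u} reads $\sum_{b\subset t}\nu_b\, d_b|_t=0$. Taking its partial scalar product with $d_{ij}|_t$ and classifying terms by how many vertices the edge $b$ shares with $ij$: the diagonal term contributes $\nu_{ij}Y_{ij}^{(t)}$; edges sharing one vertex with $ij$ contribute multiples of $X_{\{ijk\}}$ or $X_{\{ijl\}}$ (with signs coming from Lemma~\ref{l:s} and the antisymmetry~\eqref{da}); and the opposite edge $b=kl$ contributes zero, by the same orthogonality argument used in~\eqref{ortho}. After substituting $\omega_{ijm}=\nu_{jm}-\nu_{im}+\nu_{ij}$ to eliminate the differences of $\nu$-values, the identity reorganises as
\[
\nu_{ij}\bigl(Y_{ij}^{(t)}+X_{\{ijk\}}+X_{\{ijl\}}\bigr) - \omega_{ijk} X_{\{ijk\}} - \omega_{ijl} X_{\{ijl\}} = 0,
\]
and the first parenthesis vanishes by the preparatory identity, leaving $\omega_{ijk} X_{\{ijk\}}+\omega_{ijl} X_{\{ijl\}}=0$. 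Reading off the induced orientations of $\{ijk\}$ and $\{ijl\}$ in $\partial t$, this says precisely that $\omega_s X_s$ agrees on the two faces of $t$ that contain~$b$.

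Since any two of the four faces of $t$ share an edge, varying $b$ over the edges of $t$ yields enough equalities to conclude that $\omega_s X_s$ is common to all four faces, proving the lemma. The main obstacle I anticipate is the careful sign bookkeeping: one must consistently reconcile the antisymmetry~\eqref{da}, the induced orientation of $s\subset\partial t$, and the convention that $\omega$ is written on ordered triangles. These are routine case checks, but they have to be performed uniformly to avoid spurious minus signs in the final equality.
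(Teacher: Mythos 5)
Your proof is correct and is essentially the paper's own argument: both rest on the $t$-component of relation~\eqref{5u}, the orthogonality $\langle d_{ij},d_{kl}\rangle_t=0$ for the opposite edge (the \eqref{ortho}-type step), and the vertex relations~\eqref{dv}, yielding the pairwise equality $\omega_{ijk}X_{\{ijk\}}+\omega_{ijl}X_{\{ijl\}}=0$ for the two faces sharing an edge and then connecting all four faces. The only (cosmetic) difference is that the paper eliminates $\nu$ at the operator level first (its ``small exercise'' producing the $\omega$-weighted relation) and then pairs with $d_{12}$, whereas you pair with $d_{ij}$ first and cancel the $\nu$-terms afterwards via the diagonal identity $Y_{ij}^{(t)}+X_{\{ijk\}}+X_{\{ijl\}}=0$.
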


\begin{proof}
Let us prove, for instance, that
\begin{equation}\label{s}
-\omega_{123}\langle d_{12},d_{13}\rangle_{1234}=\omega_{124}\langle d_{12},d_{14}\rangle_{1234}.
\end{equation}
(the minus sign accounts for opposite orientations of $123$ and~$124$). A small exercise shows that the following linear relation is a consequence of~\eqref{5u}:
\[
-\omega_{123}d_{13}|_{1234}-\omega_{124}d_{14}|_{1234}+\omega_{234}d_{34}|_{1234}=0.
\]
Multiplying this scalarly by~$d_{12}$ and using once again orthogonality~\eqref{ortho}, we get~\eqref{s}.
\end{proof}

\begin{lemma}\label{l:u}
Expression~\eqref{s-gen} also remains the same \emph{for all tetrahedra~$t$} forming the boundary of pentachoron~$u$, if these tetrahedra are oriented consistently (as parts of the boundary~$\partial u$).
\end{lemma}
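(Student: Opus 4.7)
My plan is to reduce Lemma~\ref{l:u} to a pairwise statement: for any two consistently oriented boundary tetrahedra $t_1, t_2 \subset \partial u$, I will show that~(\ref{s-gen}) takes the same value. This suffices because any two boundary tetrahedra of the 4-simplex $u = 12345$ have the form $u\setminus\{a\}$ and $u\setminus\{b\}$ and hence automatically share the common 2-face $s = u\setminus\{a,b\}$; I will compute~(\ref{s-gen}) from both sides using this shared $s$.

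The first step is to fix a vertex $v\in s$ and let $b_1, b_2 \subset s$ be the two edges of $s$ oriented outward from $v$. By Lemma~\ref{l:t}, the invariant value associated with $t_i$ can be written as $\omega_{s^{(i)}} \langle d_{b_1}, d_{b_2}\rangle_{t_i}$, where $s^{(i)}$ denotes $s$ endowed with the orientation induced from $t_i$. The orientation input I need is that when $t_1, t_2$ are consistently oriented as parts of $\partial u$, the two induced orientations of their common face $s$ are opposite; this is simply the fact that $s$ enters $\partial^2 u$ with coefficient zero. Consequently $\omega_{s^{(1)}} = -\omega_{s^{(2)}}$, and it remains to check the sign relation $\langle d_{b_1}, d_{b_2}\rangle_{t_1} = -\langle d_{b_1}, d_{b_2}\rangle_{t_2}$.

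I would obtain this last identity in the same style as in the proofs of Lemmas~\ref{l:s} and~\ref{l:t}, by using the total orthogonality $\langle d_{b_1}, d_{b_2}\rangle = 0$. Because of the support restriction on edge operators built into~(\ref{bpgx}), a summand $\langle d_{b_1}, d_{b_2}\rangle_t$ in the expansion $\langle d_{b_1}, d_{b_2}\rangle = \sum_{t\subset u} \langle d_{b_1}, d_{b_2}\rangle_t$ can be nonzero only when $t\supset b_1$ and $t\supset b_2$, equivalently $t\supset s$. Within $u$ there are exactly two such tetrahedra, namely $t_1$ and $t_2$, so the sum collapses to $\langle d_{b_1}, d_{b_2}\rangle_{t_1} + \langle d_{b_1}, d_{b_2}\rangle_{t_2} = 0$, which is exactly what is needed.

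The one place where I expect to have to be careful --- the main, if modest, obstacle --- is the orientation bookkeeping: verifying that the induced orientations of the shared face $s$ from the two neighbouring boundary tetrahedra are indeed opposite once $t_1, t_2$ are oriented consistently within $\partial u$. After that sign is under control, the algebraic core of the argument is just the one-line collapse of the total-orthogonality relation to its two surviving summands.
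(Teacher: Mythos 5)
Your proposal is correct and follows essentially the same route as the paper: reduce to two tetrahedra sharing the 2-face $s$, use that consistent orientation in $\partial u$ flips the induced orientation of $s$ (hence the sign of $\omega_s$), and cancel this against the sign flip of the partial scalar product, which follows because $t$ and $t'$ are the only tetrahedra containing both $b_1$ and $b_2$, so orthogonality collapses to $\langle d_{b_1},d_{b_2}\rangle_t+\langle d_{b_1},d_{b_2}\rangle_{t'}=0$. Your extra observation that any two boundary tetrahedra of a pentachoron automatically share a 2-face is a nice explicit justification of the pairwise reduction, but the argument is the same.
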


\begin{proof}
It is enough to consider the situation where $s$ is the common 2-face of two tetrahedra $t,t'\subset u$, and show that
\begin{equation}\label{dbb}
\langle d_{b_1},d_{b_2} \rangle_t = -\langle d_{b_1},d_{b_2} \rangle_{t'}.
\end{equation}
Indeed, as the orientation of~$s$ as part of~$\partial t$ is different from its orientation as part of~$\partial t'$, there are two values~$\omega_s$ differing in sign, and \eqref{dbb} will yield at once that \eqref{s-gen} is the same for $t$ and~$t'$.

To prove~\eqref{dbb}, we note that $t$ and~$t'$ are the only tetrahedra containing both $b_1$ and~$b_2$, so
\[
0=\langle d_{b_1},d_{b_2} \rangle = \langle d_{b_1},d_{b_2} \rangle_t + \langle d_{b_1},d_{b_2} \rangle_{t'}.
\]
\end{proof}

\begin{cnv}\label{cnv:en}
We normalize edge operators in such way that quantity~\eqref{s-gen} becomes unity.
\end{cnv}

Here is the matrix of scalar products $\langle d_a,d_b \rangle_{1234}$ calculated according to Convention~\ref{cnv:en}. The rows (resp.\ columns) correspond to edge~$a$ (resp.~$b$) taking values in lexicographic order: 12, 13, 14, 23, 24, 34:
\begin{equation}\label{sc}
\begin{pmatrix}
 \omega^{-1}_{124}{-}\omega^{-1}_{123} & \omega^{-1}_{123} & -\omega^{-1}_{124} & -\omega^{-1}_{123} & \omega^{-1}_{124} & 0 \\[1ex]
 \omega^{-1}_{123} & -\omega^{-1}_{134}{-}\omega^{-1}_{123} & \omega^{-1}_{134} & \omega^{-1}_{123} & 0 & -\omega^{-1}_{134} \\[1ex]
 -\omega^{-1}_{124} & \omega^{-1}_{134} & \omega^{-1}_{124}{-}\omega^{-1}_{134} & 0 & -\omega^{-1}_{124} & \omega^{-1}_{134} \\[1ex]
 -\omega^{-1}_{123} & \omega^{-1}_{123} & 0 & \omega^{-1}_{234}{-}\omega^{-1}_{123} & -\omega^{-1}_{234} & \omega^{-1}_{234} \\[1ex]
 \omega^{-1}_{124} & 0 & -\omega^{-1}_{124} & -\omega^{-1}_{234} & \omega^{-1}_{124}{+}\omega^{-1}_{234} & -\omega^{-1}_{234}\\[1ex]
 0 & -\omega^{-1}_{134} & \omega^{-1}_{134} & \omega^{-1}_{234} & -\omega^{-1}_{234} & \omega^{-1}_{234}{-}\omega^{-1}_{134} 
\end{pmatrix} .
\end{equation}

\begin{remark}
To calculate \emph{diagonal} elements in~\eqref{sc} is an easy exercise using linear relations similar to~\eqref{d3}.
\end{remark}

\begin{remark}\label{r:45}
As for tetrahedron~1235, we must not only replace `4' by `5' in~\eqref{sc}, but also change all signs --- due to its different orientation! Similarly, analogues of matrix~\eqref{sc} for other tetrahedra can be calculated.
\end{remark}

\subsection{Superisotropic operators and matrix~$F$}\label{ss:si}

\emph{Superisotropic operators} are such operators of the form~\eqref{bpgx} that annihilate the weight~$\mathcal W_u$ and whose \emph{each $t$-component} is isotropic, i.e., either $\gamma_t=0$ or $\beta_t=0$. The rows of matrix~$F$ correspond to superisotropic operators in the following sense: every component of the column
\begin{equation}\label{px}
\mathsf p+F\mathsf x,
\end{equation}
where $\mathsf x$ is given by~\eqref{x} and $\mathsf p$, similarly, by
\[
\mathsf p = \begin{pmatrix} \partial_{2345} & \partial_{1345} & \partial_{1245} & \partial_{1235} & \partial_{1234} \end{pmatrix}^{\mathrm T},
\]
is superisotropic.

We recall~\cite[Subsection~4.2]{full-nonlinear} how superisotropic operators \emph{proportional} to entries of~\eqref{px} are constructed in terms of edge operators. They all are linear combinations written as
\begin{equation}\label{1:f}
g = \sum_{1\le i<j\le 5} \alpha_{ij} d_{ij},\qquad \alpha_{ij}\in\mathbb C.
\end{equation}
First, we choose and fix one of two square roots of each~$\omega_s$:
\[
q_s \stackrel{\rm def}{=} \sqrt{\omega_s}.
\]
Second, we define ``initial''~$\alpha_{ij}$ as
\begin{equation}\label{1:alpha}
\alpha_b = \prod_{\substack{s\supset b\\ \mathrm{or}\;s\cap b=\emptyset}} q_s \,.
\end{equation}

\begin{example}
As the 2-faces $s\subset 12345$ containing edge~$12$ are $123$, $124$ and~$125$, and the only 2-face not intersecting with~$12$ is~$345$, such ``initial'' $\alpha_{12}$ is
\begin{equation*}
\alpha_{12} = q_{123}q_{124}q_{125}q_{345}.
\end{equation*}
\end{example}

Finally, the operator proportional to the $i$-th entry in~\eqref{px}, and thus corresponding to the tetrahedron~$t$ \emph{not containing} the vertex~$i$, is obtained by the following change of signs:
\begin{equation}\label{alpha:znaki}
\alpha_b \text{ \ remains the same if \ } b\subset t, \text{ \ else \ }\alpha_b \mapsto -\alpha_b.
\end{equation}

We want to identify the entries in column~\eqref{px} with the operators given by \eqref{1:f}, \eqref{1:alpha} and~\eqref{alpha:znaki}. Such identifications are determined to within a renormalization $x_t\mapsto x'_t=\lambda_t x_t$ of Grassmann variables, implying also $\partial_i\mapsto \partial'_i=(1/\lambda_i) \partial_i$.

\begin{remark}
This renormalization leads to multiplying matrix~$F$ from \emph{both sides} by the diagonal matrix $\diag(\lambda_{2345}^{-1},\,\dots\,,\lambda_{1234}^{-1})$.
\end{remark}

To fix the mentioned arbitrariness, we choose a distinguished edge~$a$ in every tetrahedron~$t$ and assume that the restriction of~$d_a$ onto~$t$ has a unit coefficient before~$\partial_t$:
\[
d_a|_t=\partial_t+\gamma x_t.
\]
As $\langle \partial_t, x_t \rangle = 1$, this implies
\[
\gamma=\frac{1}{2}\langle d_a, d_a\rangle_t.
\]

\begin{cnv}\label{cnv:d}
In this paper, the distinguished edge~$a$ in any tetrahedron~$t$ will always be the lexicographically first one, for example, $a=12$ in tetrahedron~$t=1234$.
\end{cnv}

We now denote~$g^{(t)}$ the superisotropic operator defined according to \eqref{1:f}, \eqref{1:alpha} and~\eqref{alpha:znaki}. If such operator contains a summand~$\gamma x_{t'}$, then $\gamma=\langle g^{(t)}, d_a\rangle_{t'}$, and if it contains $\beta\partial_t$, then $\beta=2\dfrac{\langle g^{(t)}, d_a\rangle_t}{\langle d_a, d_a\rangle_t}$. Hence, the matrix element $F_{tt'} = \gamma / \beta $ (because the coefficient at~$\partial_t$ must be set to unity, according to~\eqref{px}), i.e.,
\begin{equation}\label{Ftt'}
F_{tt'} = \frac{\langle g^{(t)}, d_a\rangle_{t'}\, \langle d_a, d_a\rangle_t}{2\langle g^{(t)}, d_a\rangle_t}.
\end{equation}
The scalar products are calculated according to~\eqref{sc} and Remark~\ref{r:45}.

\begin{example}
Here is a typical matrix element:
\begin{equation}\label{F12}
F_{12}=F_{2345,1345}=
-\frac{(q_{235}^2-q_{234}^2)}{2
 q_{134}q_{135}q_{234}q_{235}}\cdot \frac{f_{12}^{(\mathrm n)}}{f_{12}^{(\mathrm d)}},
\end{equation}
where
\begin{multline}\label{fn}
f_{12}^{(\mathrm n)} = q_{124}q_{134}q_{235}q_{345} -q_{125}q_{135}q_{234}q_{345} +q_{123}q_{135}^2q_{245} \\
-q_{123}q_{134}^2q_{245}-q_{124}q_{135}q_{145}q_{235}+q_{125}q_{134}q_{145}q_{234}
\end{multline}
and
\begin{multline}\label{fd}
f_{12}^{(\mathrm d)} = q_{125}q_{134}q_{235}q_{345} -q_{124}q_{135}q_{234}q_{345} -q_{124}q_{135}q_{235}q_{245} \\
+q_{125}q_{134}q_{234}q_{245}+q_{123}q_{145}q_{235}^2-q_{123}q_{145}q_{234}^2.
\end{multline}
\end{example}

\section{Divisors of matrix elements}\label{s:d}

The central part of the present work consisted in finding a nice description for poles and zeros of matrix elements~$F_{tt'}$ of the typical form~\eqref{F12}. The point is, of course, that the quantities~$\omega_{ijk}=q_{ijk}^2$ make a cocycle, so there are \emph{dependencies}
\begin{equation}\label{qbd}
q_{jkl}^2-q_{ikl}^2+q_{ijl}^2-q_{ijk}^2=0
\end{equation}
for all tetrahedra~$ijkl$.

\subsection{Variables~$a_{ij}$ and their relation to ``initial''~$\alpha_{ij}$}

Recall that we are working within \emph{one pentachoron}~$12345$. It has ten 2-faces, as well as ten edges. This fact, together with the accumulated experience (compare \cite[formula~(50)]{full-nonlinear}), suggests the idea to introduce a \emph{1-chain}~$a_{ik}$ such that $\omega_{ijk}$ is written as a product of its three values, namely:
\begin{equation}\label{ob}
\omega_{ijk}=a_{ij}a_{ik}a_{jk}.
\end{equation}

Given all~$\omega_{ijk}$, the $a_{ij}$ are found from the system of equations which become linear after taking logarithms and are easily solved. Interestingly, the result is, up to an overall factor, our old alphas from formula~\eqref{1:alpha}:
\begin{equation}\label{a-alpha}
a_{ij} = p \cdot \alpha_{ij},
\end{equation}
where
\[
p = \left( \prod_{\substack{\text{over all 2-faces }ijk\\ \text{of pentachoron }12345}} \omega_{ijk} \right)^{-1/6}.
\]

The cocycle relations are now written (instead of~\eqref{qbd}) as
\begin{equation}\label{abd}
a_{kl}a_{jl}a_{jk}-a_{kl}a_{il}a_{ik}+a_{jl}a_{il}a_{ij}-a_{jk}a_{ik}a_{ij}=0.
\end{equation}

\begin{remark}\label{a-anti}
We do not permute the indices of~$a_{ij}$ in this paper, but if needed, the natural idea is to assume that
\[
a_{ij}=-a_{ji}.
\]
\end{remark}

\subsection{Matrix elements in terms of~$a_{ij}$}

Matrix elements~$F_{tt'}$ can now be calculated in terms of~$a_{ij}$. To be exact, here is what we do: set $\alpha_{ij}=a_{ij}/p$ according to~\eqref{a-alpha}; the value of~$p$ is not of great importance because it will soon cancel out. Then apply formula~\eqref{Ftt'} with~$g^{(t)}$ expressed using \eqref{1:f} and~\eqref{alpha:znaki}; the scalar products are, of course, calculated according to \eqref{sc}, Remark~\ref{r:45}, and~\eqref{ob}. The following two examples show what we get.

\begin{example}
\begin{multline}\label{F12a}
F_{12} = 
\frac{a_{25}a_{35}-a_{24}a_{34}}{2a_{13}a_{14}a_{15}a_{34}a_{35}} \\
\cdot \frac{a_{15}a_{34}a_{35}-a_{14}a_{34}a_{35}+a_{14}a_{15}a_{35}-a_{13}a_{15}a_{35}-a_{14}a_{15}a_{34}+a_{13}a_{14}a_{34}}{a_{25}a_{34}a_{35}-a_{24}a_{34}a_{35}-a_{24}a_{25}a_{35}+a_{23}a_{25}a_{35}+a_{24}a_{25}a_{34}-a_{23}a_{24}a_{34}}\,.
\end{multline}
\end{example}

\begin{example}
\begin{multline}\label{F21a}
F_{21} = 
-\frac{a_{15}a_{35}-a_{14}a_{34}}{2a_{23}a_{24}a_{25}a_{34}a_{35}} \\
\cdot \frac{a_{25}a_{34}a_{35}-a_{24}a_{34}a_{35}+a_{24}a_{25}a_{35}-a_{23}a_{25}a_{35}-a_{24}a_{25}a_{34}+a_{23}a_{24}a_{34}}{a_{15}a_{34}a_{35}-a_{14}a_{34}a_{35}-a_{14}a_{15}a_{35}+a_{13}a_{15}a_{35}+a_{14}a_{15}a_{34}-a_{13}a_{14}a_{34}}\,.
\end{multline}
\end{example}

Of course,
\begin{equation}\label{Fskew}
F_{12}=-F_{21},
\end{equation}
even if it is not immediately obvious from \eqref{F12a} and~\eqref{F21a}. We will shed some light on this by studying the poles and zeros of these expressions.

\subsection{The variety of zeros of the main factor in the denominator of a matrix element as function of six variables}

The main factor in the denominator of~\eqref{F12a} is
\begin{equation}\label{k12}
a_{25}a_{34}a_{35}-a_{24}a_{34}a_{35}-a_{24}a_{25}a_{35}+a_{23}a_{25}a_{35}+a_{24}a_{25}a_{34}-a_{23}a_{24}a_{34},
\end{equation}
and its pleasing feature is that is depends on the six variables~$a_{ij}$ belonging to just one tetrahedron~$2345$. There is just one dependence between these~$a_{ij}$:
\begin{equation}\label{2345}
a_{34}a_{35}a_{45}-a_{24}a_{25}a_{45}+a_{23}a_{25}a_{35}-a_{23}a_{24}a_{34}.
\end{equation}

\begin{lemma}\label{l:sing}
The primary decomposition of the affine algebraic variety determined by \eqref{F12a} and~\eqref{2345}, and lying in the affine space of six variables $a_{23},\dots,a_{45}$, consists of the four irreducible components given by the following primary ideals, which are also already prime:
\begin{gather}
(a_{25}+a_{34},\, a_{24}+a_{35}), \label{2345D-} \\
(a_{35},\, a_{24}), \label{2345_0a} \\
(a_{34},\, a_{25}), \label{2345_0b} 
\end{gather}
and
\begin{multline}\label{2345d}
(a_{24}a_{25}a_{34}-a_{24}a_{25}a_{35}-a_{24}a_{34}a_{35}+a_{25}a_{34}a_{35}+a_{24}a_{25}a_{45}-a_{34}a_{35}a_{45},\\
a_{23}a_{25}a_{34}-a_{23}a_{25}a_{35}+a_{23}a_{25}a_{45}-a_{23}a_{34}a_{45}+a_{25}a_{34}a_{45}-a_{34}a_{35}a_{45},\\
a_{23}a_{24}a_{35}-a_{23}a_{25}a_{35}-a_{23}a_{24}a_{45}+a_{24}a_{25}a_{45}+a_{23}a_{35}a_{45}-a_{24}a_{35}a_{45},\\
a_{23}a_{24}a_{34}-a_{23}a_{25}a_{35}+a_{24}a_{25}a_{45}-a_{34}a_{35}a_{45})
\end{multline}
\end{lemma}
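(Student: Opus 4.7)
The plan is to establish the stated decomposition in three steps: the containments $I\subseteq J_\ell$ for each of the four listed ideals $J_1,\ldots,J_4$; the primality of each $J_\ell$; and the reverse inclusion $J_1\cap J_2\cap J_3\cap J_4\subseteq I$. Here $I=(K,c)$ denotes the ideal in $\mathbb{C}[a_{23},a_{24},a_{25},a_{34},a_{35},a_{45}]$ generated by the polynomial~\eqref{k12} (call it $K$) and the cocycle~\eqref{2345} (call it $c$).

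The containments $I\subseteq J_\ell$ are direct. For $J_2=(a_{35},a_{24})$ and $J_3=(a_{34},a_{25})$, every monomial of $K$ and of $c$ visibly contains at least one of the two vanishing variables. For $J_1=(a_{25}+a_{34},a_{24}+a_{35})$, the substitutions $a_{25}\mapsto -a_{34}$ and $a_{24}\mapsto -a_{35}$ cause the terms of $K$ and $c$ to cancel in pairs. For $J_4$, the listed generators $g_1,g_2,g_3,g_4$ satisfy the identities $g_4=-c$ and $g_1-g_4=K$, so $K,c\in J_4$ by construction. The primality of $J_1,J_2,J_3$ is immediate since each is generated by two linear forms and therefore cuts out an affine $4$-plane. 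For $J_4$, I would prove primality by exhibiting a birational parameterization of $V(J_4)$ by $\mathbb{A}^4$: solve two of the generators for $a_{25}$ and $a_{35}$ as rational functions of $(a_{23},a_{24},a_{34},a_{45})$, then verify that the remaining two generators hold identically as consequences, showing that the variety is irreducible of dimension four.

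The reverse inclusion $\bigcap_\ell J_\ell\subseteq I$ is the main obstacle, and it encompasses both the set-theoretic equality $V(I)=V_1\cup\cdots\cup V_4$ and the radicality of $I$ itself. For the set-theoretic direction, I would case-split on the vanishing of the quadratic form $a_{25}a_{35}-a_{24}a_{34}$, which is the common $a_{23}$-coefficient of both $K$ and $c$ (note that this form vanishes on $V_1$, $V_2$, $V_3$ by direct substitution). When this form is nonzero, the equation $K=0$ determines $a_{23}$ uniquely, and an analogous argument using the coefficient $a_{34}a_{35}-a_{24}a_{25}$ together with $c=0$ determines $a_{45}$; the resulting consistency conditions are equivalent to the four generators of $J_4$ vanishing, placing the point in $V_4$. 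When the quadratic form does vanish, a further case split on whether each of the pairs $\{a_{24},a_{35}\}$, $\{a_{25},a_{34}\}$ consists of two zeros exhausts the remaining possibilities and yields $V_2$, $V_3$, and $V_1$ respectively. Finally, the radicality of $I$---ensuring that no component appears with multiplicity greater than one---can be confirmed by comparing a Gr\"obner basis of $I$ to one of $\bigcap_\ell J_\ell$; this last step is inherently computational, but tractable given the low degree of all polynomials involved.
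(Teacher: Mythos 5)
Your hand-checkable identities are correct: indeed $g_4=-\,$\eqref{2345} and $g_1-g_4=$ \eqref{k12} for the generators $g_1,\dots,g_4$ of \eqref{2345d}, the containments of the ideal $I=(\eqref{k12},\eqref{2345})$ in the three linear ideals hold, and $a_{25}a_{35}-a_{24}a_{34}$ is the common $a_{23}$-coefficient vanishing on \eqref{2345D-}--\eqref{2345_0b}. But two steps of your outline have real gaps. First, primality of \eqref{2345d}: rationally solving two generators for $a_{25},a_{35}$ only parameterizes the open part of $V(J_4)$ where the relevant leading coefficients are nonzero, so at best it shows that the closure of that part is an irreducible fourfold; it does not exclude further components of $V(J_4)$ inside the degeneration locus, and --- more importantly --- irreducibility of the \emph{variety} is weaker than primality of the \emph{ideal} $J_4$, which also requires radicality (equivalently, that $J_4$ is the full vanishing ideal of the parameterized variety, which is what the trigonometric parameterization of Lemma~\ref{l:2345d} gives only set-theoretically). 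Since the lemma asserts the components' ideals are prime, this must be closed, in practice by the same kind of Gr\"obner/Singular computation you already invoke at the end. Second, your Case B is false as stated: when $a_{25}a_{35}=a_{24}a_{34}$, the solutions are \emph{not} exhausted by the two coordinate planes and \eqref{2345D-}. For example, $a_{24}=a_{25}\ne 0$, $a_{34}=a_{35}\ne 0$ with $a_{24}\ne -a_{34}$, $a_{45}=0$ and $a_{23}$ arbitrary satisfies both \eqref{k12} and \eqref{2345}, lies on none of \eqref{2345D-}, \eqref{2345_0a}, \eqref{2345_0b}, and is caught only by \eqref{2345d}; so that branch of the case analysis must be redone (and in Case A your determination of $a_{45}$ silently assumes $a_{34}a_{35}-a_{24}a_{25}\ne 0$, an untreated sub-case).

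For context, the paper's own proof is simply a primary-decomposition computation in Singular. Your plan is more informative where it is correct, but its decisive steps --- primality of \eqref{2345d} and the ideal-theoretic equality $I=\bigcap_\ell J_\ell$ (radicality, no embedded components) --- are either gapped as above or explicitly deferred to a Gr\"obner computation, i.e., to exactly the calculation that constitutes the paper's proof. Either repair the two points above and keep the computation only for the final radicality check, or accept that the computer-algebra verification is carrying the argument.
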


\begin{proof}
Direct calculation using Singular computer algebra system.
\end{proof}

\begin{remark}\label{r:sing}
The reader may notice that some more computer calculations of primary decompositions might have been helpful in the process of doing this work. They are, however, more difficult, and the calculation in Lemma~\ref{l:sing} is typical of what the available computer capabilities allowed us to do. And the goal of this work has been achieved!
\end{remark}

While there is no problem understanding the structure of components \eqref{2345D-}--\eqref{2345_0b}, the component~\eqref{2345d} deserves the following lemma.

\begin{lemma}\label{l:2345d}
The affine algebraic variety determined by the ideal~\eqref{2345d}, and lying in the space of six variables $a_{23},\dots,a_{45}$, admits the following trigonometric parameterization:
\begin{equation}\label{tg}
a_{ij} = c\cdot\tan(x_i-x_j).
\end{equation}
It is thus rational, because parameterization~\eqref{tg} becomes rational if re-written in terms of $c$ and tangents of three independent differences of~$x_i$.
\end{lemma}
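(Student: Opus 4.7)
My plan is to verify by direct computation that the parameterization~(\ref{tg}) satisfies all four generators of the ideal~(\ref{2345d}), and then to combine this with the primality asserted in Lemma~\ref{l:sing} and a dimension count to conclude that the image of the parameterization is Zariski-dense in the (irreducible) variety, so that the variety is rational.

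The driving identity is obtained by writing $\tau_{ij}=\tan(x_i-x_j)$ and applying the tangent-of-sum formula to $x_i-x_k=(x_i-x_j)+(x_j-x_k)$ for $i<j<k$, which yields $\tau_{ij}+\tau_{jk}-\tau_{ik}+\tau_{ij}\tau_{jk}\tau_{ik}=0$. Substituting $a_{ij}=c\tau_{ij}$ and multiplying by $c^{3}$ gives
\[
a_{ij}a_{ik}a_{jk} \,=\, c^{2}\bigl(a_{ik}-a_{ij}-a_{jk}\bigr),\qquad i<j<k,
\]
which holds on the parameterized surface. I apply it to each of the four triangles $234$, $235$, $245$, $345$ of the tetrahedron~$2345$. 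The fourth generator of~(\ref{2345d}), which is the cocycle relation~(\ref{2345}), is the alternating sum of these four triangle identities and telescopes to zero. For the remaining three generators I would re-group the six monomials of each as $P_{1}L_{1}+P_{2}L_{2}$, where each $P_{i}$ is a product of two $a_{kl}$'s and each $L_{i}$ is an alternating sum of the three $a_{kl}$'s indexed by the edges of some triangle of $2345$; for instance, generator~1 regroups as $a_{24}a_{25}(a_{34}-a_{35}+a_{45})+a_{34}a_{35}(a_{25}-a_{24}-a_{45})$, and the identity above turns the two parenthesised sums into $-c^{-2}a_{34}a_{35}a_{45}$ and $+c^{-2}a_{24}a_{25}a_{45}$, so the two pieces become the common monomial $c^{-2}a_{24}a_{25}a_{34}a_{35}a_{45}$ with opposite signs and cancel. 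Generators~2 and~3 admit completely analogous regroupings.

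For rationality I substitute $T_{i}=\tan x_{i}$, so that $a_{ij}=c(T_{i}-T_{j})/(1+T_{i}T_{j})$, and remove the one-parameter redundancy $x_{i}\mapsto x_{i}+\theta$ (which leaves every $\tau_{ij}$ invariant) by fixing, say, $T_{2}=0$; this leaves four independent rational parameters $(c,T_{3},T_{4},T_{5})$. To finish I observe that the variety of Lemma~\ref{l:sing} is cut out by two polynomials in six variables, so by Krull's principal-ideal theorem every irreducible component has dimension at least~$4$; since (\ref{2345d}) is prime, the corresponding component is irreducible, and the existence of a 4-dimensional rational parameterization lying in it forces its dimension to equal exactly~$4$ and the parameterization to be dominant. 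The main technical obstacle is spotting the right regrouping in the verification step; once it is found the rest is automatic.
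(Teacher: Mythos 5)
Your verification half is correct, and it is in substance exactly the paper's proof, which is simply a ``direct calculation'': writing $\tau_{ij}=\tan(x_i-x_j)$, the addition formula gives $a_{ij}a_{ik}a_{jk}=c^2(a_{ik}-a_{ij}-a_{jk})$ on the parameterized locus, the cocycle generator (the last polynomial in~\eqref{2345d}, which is~\eqref{2345} up to sign) telescopes, and each of the other three generators splits as you say into two products of a pair $a_{kl}a_{k'l'}$ with a triangle sum, the two pieces becoming $\pm c^{-2}a_{24}a_{25}a_{34}a_{35}a_{45}$ (for the second and third generators the regroupings are $a_{23}a_{25}(a_{34}-a_{35}+a_{45})+a_{34}a_{45}(a_{25}-a_{23}-a_{35})$ and $a_{23}a_{35}(a_{24}-a_{25}+a_{45})+a_{24}a_{45}(a_{25}-a_{23}-a_{35})$, and they cancel just as your first one does). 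So the image of~\eqref{tg} does lie in the variety of~\eqref{2345d}.

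The gap is in your closing inference: ``the existence of a 4-dimensional rational parameterization lying in it forces its dimension to equal exactly~4 and the parameterization to be dominant'' does not follow from what you have established. A map from a 4-dimensional parameter space could a priori have positive-dimensional fibres, so its image closure could have dimension $<4$ and be a proper subvariety of the component; and Krull's theorem only gives the lower bound $\dim\ge 4$ for the component, not the upper bound you assert. Both missing pieces are easy to supply. First, the map is generically finite: on the image, the triangle identity for $234$ gives $c^2=a_{23}a_{24}a_{34}/(a_{24}-a_{23}-a_{34})$, and with your normalization $T_2=0$ one recovers $T_j=-a_{2j}/c$ for $j=3,4,5$; a generic image point thus has exactly two preimages (the involution $(c,T_j)\mapsto(-c,-T_j)$), so the image closure is 4-dimensional and irreducible. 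Second, the component cut out by~\eqref{2345d} has dimension at most~4: its ideal is prime by Lemma~\ref{l:sing}, and a codimension-one prime in the polynomial ring would be principal, which is impossible here since the ideal contains the irreducible cocycle polynomial~\eqref{2345} together with the first generator of~\eqref{2345d}, which is not a multiple of it (it does not involve $a_{23}$). With these two observations the image closure and the component are irreducible varieties of the same dimension, one contained in the other, hence equal, and the parameterization is dominant; the rationality statement is then the paper's own (informal) conclusion from the rational form of~\eqref{tg}.
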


\begin{proof}
Direct calculation.
\end{proof}

\subsection{Divisor of a matrix element: almost full description, excluding only subvarieties $a_{ij}=0$}\label{ss:d}

We consider the affine algebraic variety~$\mathcal M$ in the space of \emph{ten} variables $a_{12},\dots,a_{45}$, defined by the relations~\eqref{abd} for all tetrahedra $ijkl\subset u=12345$. Then, we consider its Zariski open subspace~$\mathcal M'$ defined as follows:
\begin{equation}\label{M'}
\mathcal M' = (\mathcal M \text{ \ minus all subvarieties where some \ }a_{ij}=0).
\end{equation}

\begin{remark}
The goal of this paper consists in finding the expressions \eqref{cl} and~\eqref{cr} below, for example, by guess. It looks hardly possible to guess these expressions based on nothing, but studying divisors on~$\mathcal M'$ proves to be enough for achieving this goal, so, we content ourself with~$\mathcal M'$. Nevertheless, studying divisors on the whole~$\mathcal M$ might be also of interest, because, for instance, \eqref{2345_0a} and~\eqref{2345_0b} lie exactly in $\mathcal M\setminus\mathcal M'$.
\end{remark}

In~$\mathcal M'$, we introduce the following subvarieties of \emph{codimension~1}, denoted as~$D$ with indices because we think of them as Weil divisors:
\begin{itemize}\itemsep 0pt
 \item $D_u$: this is the subvariety given by the old formulas~\eqref{tg}, but now ten of them: $1\le i\le j\le 5$,
 \item $(D_u)_K$: choose now subset $K\subset \{1,2,3,4,5\}$, and define~$(D_u)_K$ by the same formulas~\eqref{tg} except that we change the signs of those~$a_{ij}$ whose exactly one subscript $i$ or~$j$ is in~$K$. We write also $(D_u)_1$, $(D_u)_{12}$, etc.\ instead of $(D_u)_{\{1\}}$, $(D_u)_{\{1,2\}}$, etc.,
 \item $D_t^-$: for a tetrahedron~$t=ijkl\subset u$, let $b=ij$ be the distinguished edge. Then $D_t^-$ is given by the following equations (compare to~\eqref{2345D-}!):
  \begin{equation}\label{D-}
  D_t^- \colon \quad \begin{cases}a_{ik}=-a_{jl}, \\ a_{jk}=-a_{il}, \end{cases}
  \end{equation}
 \item $D_t^+$, similarly:
  \begin{equation}\label{D+}
  D_t^+ \colon \quad \begin{cases}a_{ik}=a_{jl}, \\ a_{jk}=a_{il}. \end{cases}
  \end{equation}
\end{itemize}

\begin{lemma}\label{l:sigma}
For a tetrahedron $t=ijkl$ and its distinguished edge~$ij$, the sum $D_t^- + D_t^+$ gives, on~$\mathcal M'$, exactly the zero divisor of $\dfrac{\sigma_t}{a_{ij}}=a_{jk}a_{ik}-a_{jl}a_{il}$ (compare with the first factor in the numerator of either \eqref{F12a} or~\eqref{F21a}!), where we denoted
\begin{equation}\label{sigma}
\sigma_t=\omega_{ijk}-\omega_{ijl}.
\end{equation}
\end{lemma}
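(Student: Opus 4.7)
The plan is to work with the explicit form
\[
\sigma_t/a_{ij} = a_{ik}a_{jk} - a_{il}a_{jl},
\]
which follows from $\omega_{ijk} = a_{ij}a_{ik}a_{jk}$ and $\omega_{ijl} = a_{ij}a_{il}a_{jl}$. Denote this function by~$f$. The inclusion $D_t^\pm \subset V(f)$ is immediate: substituting $a_{ik} = \pm a_{jl}$ and $a_{jk} = \pm a_{il}$ gives $a_{ik}a_{jk} = a_{jl}a_{il}$ in either case.

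For the reverse set-theoretic inclusion on $\mathcal M'$, I would use the cocycle relation~\eqref{abd} for the tetrahedron~$t$ in the form $a_{ij}\,f = a_{kl}\,g$ with $g = a_{jk}a_{jl} - a_{ik}a_{il}$. On $V(f) \cap \mathcal M'$, since $a_{kl}$ is invertible on~$\mathcal{M}'$, one obtains $g = 0$ as well. The pair $f=0$, $g=0$ then forces $a_{ik}^2 = a_{jl}^2$ and $a_{jk}^2 = a_{il}^2$; the ``crossed'' sign choices ($a_{ik}=a_{jl}$ with $a_{jk}=-a_{il}$, or vice versa) force $a_{ik}a_{jk}=0$, excluded on~$\mathcal M'$, so only $D_t^+\cup D_t^-$ survives.

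To upgrade this to an equality of divisors with multiplicity one, I would extract from $a_{ij}f = a_{kl}g$ the factorizations
\[
f = \frac{a_{kl}}{a_{kl}+a_{ij}}(a_{ik}+a_{jl})(a_{jk}-a_{il}) = \frac{a_{kl}}{a_{kl}-a_{ij}}(a_{ik}-a_{jl})(a_{jk}+a_{il}),
\]
valid on the open subsets $U = \{a_{kl}+a_{ij}\neq 0\}$ and $U' = \{a_{kl}-a_{ij}\neq 0\}$ of~$\mathcal M'$ respectively. Algebraically these follow by writing $2f = AC+BD$, $2g = BD-AC$ with $A=a_{ik}-a_{jl}$, $B=a_{ik}+a_{jl}$, $C=a_{jk}+a_{il}$, $D=a_{jk}-a_{il}$, and solving the resulting $2\times 2$ system given by $a_{ij}(AC+BD)=a_{kl}(BD-AC)$. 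Rerunning the cocycle analysis with $a_{ik}=-a_{jl}$ substituted into~\eqref{abd} shows that $V(a_{ik}+a_{jl})\cap\mathcal M' = D_t^-\cup\{a_{kl}+a_{ij}=0,\,a_{ik}+a_{jl}=0\}$, hence coincides on~$U$ with $D_t^-$; analogously $V(a_{jk}-a_{il})\cap U = D_t^+\cap U$. Then $(a_{ik}+a_{jl})$ is a local defining equation for $D_t^-$ of multiplicity one on~$U$ and $(a_{jk}-a_{il})$ does the same for $D_t^+$, giving $\operatorname{div}(f)|_U = (D_t^+ + D_t^-)|_U$. The second factorization handles~$U'$, and $U \cup U'$ is dense in each of~$D_t^\pm$, so the identity extends to all of~$\mathcal M'$.

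The main obstacle is the multiplicity-one claim: set-theoretic equality is comparatively easy, but each of $D_t^\pm$ is naturally presented by a \emph{pair} of equations on~$\mathcal M'$, so a single uniformizer is not visible a priori. The cocycle is precisely what collapses this pair into a single local equation on~$U$ (resp.~$U'$), valid on a Zariski-open set that is dense in each component; without it, one would be forced into a direct tangent-space calculation at a generic point of each divisor, which amounts to the same identity.
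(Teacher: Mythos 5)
Your proposal is correct and follows essentially the paper's own route: the paper's two-line proof likewise uses the cocycle relation (in the form \eqref{c}/\eqref{abd}) to show that on~$\mathcal M'$ the vanishing of $a_{ik}a_{jk}-a_{il}a_{jl}$ forces the vanishing of $a_{ik}a_{il}-a_{jk}a_{jl}$ as well, and then reads off $D_t^-$ and $D_t^+$ from this pair of equations. Your extra factorization argument for multiplicity one addresses a point the paper passes over silently (``clearly gives \eqref{D-} or~\eqref{D+}''), which is a welcome refinement, though note that its final step --- that $a_{ik}+a_{jl}$ is a local equation of multiplicity exactly one along $D_t^-$, rather than merely cutting it out set-theoretically on~$U$ --- is asserted rather than verified (e.g.\ by a tangent-space or generic-reducedness check).
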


\begin{proof}
Due to the cocycle relation~\eqref{c}, 
\[
(\sigma_t=0 \text{ \ on \ }\mathcal M') \Leftrightarrow (a_{ik}a_{jk}-a_{il}a_{jl}=0 \text{ \ and \ } a_{ik}a_{il}-a_{jk}a_{jl}=0),
\]
and the r.h.s.\ clearly gives \eqref{D-} or~\eqref{D+}.
\end{proof}

\begin{lemma}\label{l:psi}
For every triangle $s=ijk$, introduce the quantity
\[
\psi_s=a_{jk}-a_{ik}+a_{ij}.
\]
Trigonometric parameterization~\eqref{tg} specifies, on the subset where all $a_{ij}\ne 0$, the variety that can be given by the system of equation of the following form:
\[
\frac{\psi_s}{\omega_s} \text{ \ is the same for all \ } s.
\]
This applies to the case where the indices in~\eqref{tg} take four (like in Lemma~\ref{l:2345d}) as well as five values (or, in fact, any number of them).
\end{lemma}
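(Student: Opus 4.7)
The plan is to prove both inclusions of varieties separately. The forward direction is a direct computation using a tangent identity, and the reverse direction reconstructs the parameterization from the data, using only the hypothesis on triangles through a single reference vertex.

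For the forward implication I would substitute $a_{ij} = c\tan(x_i - x_j)$ into $\psi_{ijk} = a_{jk} - a_{ik} + a_{ij}$, set $A = x_i - x_j$ and $B = x_j - x_k$, and invoke the tangent identity
\[
\tan A + \tan B - \tan(A+B) = -\tan A\,\tan B\,\tan(A+B),
\]
which is an immediate rearrangement of the addition formula $\tan(A+B)(1 - \tan A\tan B) = \tan A + \tan B$. This turns the substituted formula into $\psi_{ijk} = -\omega_{ijk}/c^2$, so $\psi_s/\omega_s = -1/c^2$ is a common value independent of~$s$ and of the total number of vertices involved.

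For the converse, given a point where all $a_{ij}\ne 0$ and $\psi_s/\omega_s$ equals a common value $\mu$, I would pick $c$ with $c^2 = -1/\mu$, fix a reference vertex (say~$1$), set $x_1 = 0$, and for each $j\ne 1$ define $x_j$ by $\tan x_j = -a_{1j}/c$; then $a_{1j} = c\tan(x_1 - x_j)$ holds by construction. For any pair $i,j$ with $i,j\ne 1$, the tangent subtraction formula yields
\[
c\tan(x_i - x_j) = \frac{c^2(a_{1j}-a_{1i})}{c^2 + a_{1i}a_{1j}},
\]
and equating this to $a_{ij}$ and clearing denominators reduces exactly to $c^2\psi_{1ij} + \omega_{1ij} = 0$, which is precisely the hypothesis for the triangle $\{1,i,j\}$. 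Hence \eqref{tg} holds on every edge.

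I do not expect any genuine obstacle. The one observation worth flagging is that the reverse direction uses the hypothesis only on the $\binom{n-1}{2}$ triangles through the reference vertex; the remaining instances (triangles not containing~$1$) follow \emph{a posteriori} from the forward direction applied to the reconstructed parameterization. This explains why the apparently overdetermined system ``$\psi_s/\omega_s$ is the same for all $s$'' is in fact consistent and cuts out exactly the image of~\eqref{tg}.
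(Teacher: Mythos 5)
Your two computations are correct, and together they constitute the ``direct calculation'' that the paper leaves implicit: the tangent addition formula gives $\psi_s=-\omega_s/c^2$ on the image of~\eqref{tg} (any number of vertices), and the reconstruction from a reference vertex, using only the triangles through that vertex, inverts the parameterization. The one point that needs repair is your final claim that the system ``cuts out exactly the image of~\eqref{tg}''. Your inverse step sets $c^2=-1/\mu$, so it silently assumes the common value $\mu\ne 0$; but the system has solutions with $\mu=0$ and all $a_{ij}\ne 0$, namely the coboundary points $a_{ij}=\nu_j-\nu_i$ for generic~$\nu$ (every $\psi_s$ vanishes there), and by your own forward computation such points are never of the form $c\tan(x_i-x_j)$ with all $a_{ij}\ne0$, since on the image $\psi_s=-\omega_s/c^2\ne 0$. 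In addition, over $\mathbb C$ the equation $\tan x_j=-a_{1j}/c$ has no solution when $-a_{1j}/c=\pm i$, and your subtraction-formula step degenerates when $c^2+a_{1i}a_{1j}=0$, so the literal set-theoretic image also misses some points of the solution set for these reasons. None of this threatens the lemma as it is meant and used: the ``variety specified by~\eqref{tg}'' is the Zariski closure of the (rational) parameterization's image, the exceptional loci just listed lie in that closure (the $\mu=0$ points arise as the limit $x_j=\nu_j/c$, $c\to\infty$) and have codimension at least two in~$\mathcal M'$, so only the codimension-one statement relevant to Theorem~\ref{th:zp1} survives unchanged. To make your argument complete, either state the converse for $\mu\ne0$ and add the closure/limit remark for the degenerate strata, or carry out the reconstruction in the rational coordinates $t_j=a_{1j}/c$ (as the lemma's last sentence itself suggests), which removes the $\pm i$ issue, the case $\mu=0$ still being handled by the limiting argument.
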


\begin{proof}
Direct calculation.
\end{proof}

\begin{theorem}\label{th:zp1}
 \begin{enumerate}
  \item\label{i:1} The pole divisor of matrix element~\eqref{F12a}, restricted to~$\mathcal M'$, is~$D_u$.
  \item\label{i:2} The zero divisor of~\eqref{F12a}, restricted to~$\mathcal M'$, is $(D_u)_{12} + D_{2345}^+ + D_{1345}^+$ (the last two are defined in~\eqref{D+}).
 \end{enumerate}
\end{theorem}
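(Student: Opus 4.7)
The plan is to decompose formula \eqref{F12a} as a product of four factors and compute the divisor of each on $\mathcal{M}'$. Write $F_{12} = A \cdot B \cdot (N/D)$, where $A = a_{25}a_{35}-a_{24}a_{34}$ is the first numerator factor, $B = (2a_{13}a_{14}a_{15}a_{34}a_{35})^{-1}$ is the monomial prefactor, and $N,D$ are the six-term polynomials in the second numerator and the denominator respectively. On $\mathcal{M}'$, by definition, $B$ is nowhere zero or infinite, so $\mathrm{div}(F_{12}) = \mathrm{div}(A)+\mathrm{div}(N)-\mathrm{div}(D)$, and the theorem follows once each of the three divisors on the right is identified.

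For $A$, recognize that $A = -\sigma_{2345}/a_{23}$ with $23$ the distinguished edge of $t=2345$, so Lemma \ref{l:sigma} gives $\mathrm{div}(A) = D_{2345}^- + D_{2345}^+$. For $D$, this is exactly the polynomial \eqref{k12}, so Lemma \ref{l:sing} and Lemma \ref{l:2345d} apply directly: of the four primary components \eqref{2345D-}--\eqref{2345d}, the two in \eqref{2345_0a} and \eqref{2345_0b} lie in $\mathcal{M}\setminus\mathcal{M}'$ and contribute nothing; \eqref{2345D-} is the pullback of $D_{2345}^-$; and \eqref{2345d} is the projection of $D_u$ to the 6-variable subspace of tetrahedron $2345$. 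Hence $\mathrm{div}(D) = D_{2345}^- + D_u$ on $\mathcal{M}'$.

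The main new input is the analogous analysis of $N$, and the slick move would be to exploit the substitution $\phi\colon a_{2j}\mapsto -a_{1j}$ for $j=3,4,5$ (with all other variables fixed). A direct term-by-term comparison shows $N = -\phi(D)$, and the same substitution transforms the cocycle relation \eqref{2345} for tetrahedron $2345$ into the cocycle relation \eqref{abd} for tetrahedron $1345$. Therefore $\phi$ identifies the zero-variety of $N$ (within the $1345$-cocycle variety) with the zero-variety of $D$ (within the $2345$-cocycle variety), and Lemma \ref{l:sing} transports verbatim. The image of \eqref{2345D-} under $\phi$ is $(a_{15}-a_{34},\,a_{14}-a_{35})$, which is $D_{1345}^+$; the images of \eqref{2345_0a} and \eqref{2345_0b} remain of the form $a_{ij}=0$ and are excluded from $\mathcal{M}'$; and the image of the trigonometric component \eqref{2345d}, after absorbing the three sign flips by relabeling $x_2\mapsto x_1$, coincides with the restriction of $(D_u)_{12}$ to the subspace of tetrahedron $1345$ (the index set $K=\{1,2\}$ here records precisely that $1$ is the vertex missing from $2345$ and $2$ is the vertex missing from $1345$). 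This yields $\mathrm{div}(N) = D_{1345}^+ + (D_u)_{12}$.

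Summing, $\mathrm{div}(F_{12}) = (D_{2345}^- + D_{2345}^+) + (D_{1345}^+ + (D_u)_{12}) - (D_{2345}^- + D_u) = (D_u)_{12} + D_{2345}^+ + D_{1345}^+ - D_u$, from which parts (i) and (ii) of the theorem are immediate. The step I expect to be most delicate is the bookkeeping in the third paragraph: one must verify the sign identity $N=-\phi(D)$ precisely and then match the transported trigonometric subvariety with $(D_u)_{12}$ under the relabeling of the $x$-parameters. The rest is a straightforward application of the already-established lemmas.
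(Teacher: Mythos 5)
Your factor-by-factor strategy is workable in principle, but the step where you convert the six-variable primary decomposition of Lemma~\ref{l:sing} into divisors on the ten-variable variety $\mathcal M'$ is where the argument breaks. The trigonometric component~\eqref{2345d} lives in the space of the six variables of tetrahedron~$2345$; the zero divisor of your $D$ on $\mathcal M'$ is its \emph{preimage} under the projection forgetting $a_{12},\dots,a_{15}$, and that preimage is not $D_u$. Indeed, every cocycle relation~\eqref{abd} for a tetrahedron containing vertex~$1$ has an even number (zero or two) of factors $a_{1j}$ in each term, so the sign change $a_{1j}\mapsto -a_{1j}$, $j=2,\dots,5$, preserves $\mathcal M'$ while leaving the $2345$-variables untouched; hence $(D_u)_1$ projects onto~\eqref{2345d} exactly as $D_u$ does, and the preimage contains both divisors. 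So the zero divisor of $D$ is $D_{2345}^-+D_u+(D_u)_1$ rather than $D_{2345}^-+D_u$, and by the same token your transported analysis of $N$ gives $D_{1345}^++(D_u)_{12}+(D_u)_1$ rather than $D_{1345}^++(D_u)_{12}$ (compare Lemma~\ref{l:d}, which records precisely these divisors for $f^{(2345)}$ and $f_1^{(1345)}$ --- but that lemma comes later in the paper and its proof requires more than Lemma~\ref{l:sing}). The two omissions happen to cancel when you subtract, so your final formula agrees with the theorem, but as written the proof asserts two false intermediate divisor identities and never justifies that the preimages contain no further components, nor that the multiplicities are one; filling this in is essentially proving Lemma~\ref{l:d}.

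The paper sidesteps the lifting problem entirely: after cancelling $D_{2345}^-$ against the first numerator factor via Lemma~\ref{l:sigma}, it notes that on the pole locus of~\eqref{F12a} the ratios $\psi_s/\omega_s$ are constant over the faces of $2345$ (Lemma~\ref{l:psi}), and, since $F_{12}=-F_{21}$, the representation~\eqref{F21a} forces the same constancy over the faces of $1345$; a small computation then extends the constancy to all ten triangles, pinning the pole divisor to exactly $D_u$, with the zero divisor obtained by the parallel sign-flip observation. To rescue your route you must either prove the two preimage statements outright (ruling out extra components and fixing multiplicities), or import the paper's antisymmetry trick to cut the candidate trig-type locus down to $D_u$ on the pole side and $(D_u)_{12}$ on the zero side.
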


\begin{proof}
First, note that the component~\eqref{2345D-} of the divisor of function~\eqref{k12} cancels out with the first factor in the numerator of~\eqref{F12a}, that is,
\begin{equation}\label{n1}
a_{25}a_{35}-a_{24}a_{34},
\end{equation}
and what remains of the zero divisor of~\eqref{n1} after this canceling is~$D_{2345}^+$, according to Lemma~\ref{l:sigma}.

For item~\ref{i:1}, this means that, on the pole variety of~\eqref{F12a}, all the expressions $\psi_s/\omega_s$ are the same for $s\subset 2345$. And analyzing~\eqref{F21a} similarly (and taking into account~\eqref{Fskew}), we arrive at the conclusion that the same are also $\psi_s/\omega_s$ for $s\subset 1345$. It is not hard to deduce now (through a small calculation) that $\psi_s/\omega_s$ are the same \emph{for the whole pentachoron~$12345$, including $s=123$, $124$ and~$125$}. So, item~\ref{i:1} is proved.

For item~\ref{i:2}, we first notice that the main factor in the numerator of~$F_{12}$ (resp.~$F_{21}$) is the same (up to an overall sign) as the main factor in the denominator of~$F_{21}$ (resp.~$F_{12}$) except that the sign is changed of all~$a_{ij}$ with $i=1$ (resp.~$i=2$). For~$F_{12}$, this means that $D_{1345}^+$ appears as a component of zero divisor, in analogy with~\eqref{2345D-}, while the first paragraph of this proof means that $D_{2345}^+$ is also there. The rest, namely $(D_u)_{12}$ appears in full analogy with $D_u$ in the previous paragraph. So, item~\ref{i:2} is also proved.
\end{proof}

\begin{remark}\label{r:zp1}
Theorem~\ref{th:zp1} speaks about a specific matrix element and divisors. It applies, however, to all similar objects, with obvious changes.
\end{remark}

\section{Function $\varphi_{12345}$}\label{s:v}

On our subvariety $\mathcal M'\subset \mathcal M$~\eqref{M'}, we can express all~$a_{ij}$ in terms of~$q_{ijk}$ according to~\eqref{a-alpha}, where the factor~$p$ never vanishes and can be ignored as long as we are considering the zero or pole varieties of expressions \emph{homogeneous} in variables~$a_{ij}$.

\begin{remark}
And all functions of~$a_{ij}$ or~$q_{ijk}$ in this paper \emph{are} homogeneous.
\end{remark}

\begin{remark}
Also, the fact that $p$ is multivalued makes no obstacle on our way.
\end{remark}

\begin{cnv}\label{cnv:aq}
We will denote, taking some liberty, that part of the variety of variables~$q_{ijk}$, $1\le i<j<k\le 5$, where all $q_{ijk}\ne 0$, by the same letter~$\mathcal M'$ as the similar variety in variables~$a_{ij}$. It is implied of course that the~$q_{ijk}$ obey the cocycle relations~\eqref{qbd}. Also, we will use the old notations like $(D_u)_K$ and~$D_t^{\pm}$ for codimension one subvarieties in~$\mathcal M'$ that are like in Subsection~\ref{ss:d} except that we made the substitution~\eqref{a-alpha} in their defining equations.
\end{cnv}

For every 3-face~$t$ of pentachoron~$12345$, we define expression~$f^{(t)}$ as the biggest factor in the denominator of type~\eqref{F12}, namely:
\begin{gather}\label{ft}
f^{(2345)}=q_{125}q_{134}q_{235}q_{345}-q_{124}q_{135}q_{234}q_{345}-q_{124}q_{135}q_{235}q_{245} \nonumber\\ +q_{125}q_{134}q_{234}q_{245} +q_{123}q_{145}q_{235}^2-q_{123}q_{145}q_{234}^2, \label{ft1}\\[.5ex]
f^{(1345)}=q_{124}q_{134}q_{235}q_{345}-q_{125}q_{135}q_{234}q_{345}-q_{123}q_{135}^2q_{245} \nonumber\\ +q_{123}q_{134}^2q_{245} +q_{124}q_{135}q_{145}q_{235}-q_{125}q_{134}q_{145}q_{234}, \\[.5ex]
f^{(1245)}=q_{123}q_{125}^2q_{345}-q_{123}q_{124}^2q_{345}-q_{124}q_{134}q_{235}q_{245} \nonumber\\ +q_{125}q_{135}q_{234}q_{245} -q_{125}q_{134}q_{145}q_{235}+q_{124}q_{135}q_{145}q_{234}, \\[.5ex]
f^{(1235)}=q_{124}q_{125}^2q_{345}-q_{123}^2q_{124}q_{345}-q_{123}q_{134}q_{235}q_{245} \nonumber\\ -q_{125}q_{134}q_{135}q_{245} +q_{125}q_{145}q_{234}q_{235}+q_{123}q_{135}q_{145}q_{234}, \\[.5ex]
f^{(1234)}=q_{124}^2q_{125}q_{345}-q_{123}^2q_{125}q_{345}-q_{123}q_{135}q_{234}q_{245} \nonumber\\ -q_{124}q_{134}q_{135}q_{245} +q_{124}q_{145}q_{234}q_{235}+q_{123}q_{134}q_{145}q_{235} \label{ft5} .
\end{gather}

\begin{remark}
The overall sign of any of expressions \eqref{ft1}--\eqref{ft5} is not now of big importance.
\end{remark}

Also, for every subset $K\subset \{i,j,k,l\}$ consider function~$f_K^{(t)}$ made from \eqref{ft1}--\eqref{ft5} as follows: change the signs at those~$q_{ijk}$ having an odd number of subscripts (one or all three of $i$, $j$ and~$k$) is in~$K$. In the next Lemma~\ref{l:d} we go through the 3-faces of~$12345$ in their natural order, and write down the zero divisors of some interesting functions on~$\mathcal M'$.

\begin{lemma}\label{l:d}
\begin{itemize}\itemsep 0pt
 \item $f^{(2345)}$ has zero divisor $D_u+(D_u)_1+D_{2345}^-$,
 \item $f_1^{(1345)}$ has zero divisor $(D_u)_1+(D_u)_{12}+D_{1345}^+$,
 \item $f_{12}^{(1245)}$ has zero divisor $(D_u)_{12}+(D_u)_{123}+D_{1245}^-$,
 \item $f_{123}^{(1235)}$ has zero divisor $(D_u)_{123}+(D_u)_{1234}+D_{1235}^+$,
 \item $f_{1234}^{(1234)}=f^{(1234)}$ has zero divisor $(D_u)_{1234}+(D_u)_{12345}+D_{1234}^-=(D_u)_5+D_u+D_{1234}^-$.
\end{itemize}
Thus, on~$\mathcal M'$, the function
\begin{equation}\label{fff/ff}
\dfrac{f^{(2345)}f_{12}^{(1245)}f^{(1234)}}{f_1^{(1345)}f_{123}^{(1235)}}
\end{equation}
has the divisor (zeros with sign plus, poles with sign minus)
\begin{equation}\label{MD}
2D_u + D_{2345}^- - D_{1345}^+ + D_{1245}^- - D_{1235}^+ + D_{1234}^-.
\end{equation}
\end{lemma}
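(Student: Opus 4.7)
The plan is to establish each of the five bullet statements individually, then to assemble the divisor of~\eqref{fff/ff} by formal addition and subtraction. Each bullet fits the template of Theorem~\ref{th:zp1}: the function $f_K^{(t)}$ is, up to an irrelevant monomial prefactor in the $a_{ij}$, the main denominator factor of a matrix element of the form~\eqref{F12a}, so its zero divisor on~$\mathcal M'$ is determined by the same analysis as that of~\eqref{k12} in Lemma~\ref{l:sing}, with the tetrahedron~$t$ appropriately relabelled and with an additional sign-tracking step for the passage $f^{(t)} \mapsto f_K^{(t)}$.

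For the first bullet, $f^{(2345)}$ agrees with~\eqref{k12} up to a monomial prefactor and the $p$-dependent normalization. Of the four primary components produced by Lemma~\ref{l:sing}, the two components~\eqref{2345_0a} and~\eqref{2345_0b} disappear on restriction to~$\mathcal M'$; \eqref{2345D-} is $D_{2345}^-$ by the definition~\eqref{D-}; and~\eqref{2345d} is, by Lemma~\ref{l:2345d}, the trigonometric variety in the six edges of the tetrahedron~$2345$. The key step is to lift this latter variety to all ten edges of the pentachoron on~$\mathcal M'$: solving the cocycle relations~\eqref{abd} on the four tetrahedra containing the missing vertex~$1$ for the four unknowns $a_{1j}$, and using Lemma~\ref{l:psi} to characterise the result by the constancy of $\psi_s/\omega_s$, one finds exactly two branches, differing by a simultaneous sign flip on all~$a_{1j}$; these are $D_u$ and~$(D_u)_1$. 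This yields the zero divisor $D_u + (D_u)_1 + D_{2345}^-$.

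The remaining four bullets are obtained by the same template applied to the analogous main denominator factor for each tetrahedron $t\subset u$ in turn, with two sign adjustments to track. First, flipping the signs of those $q_{ijk}$ whose subscript set has odd intersection with~$K$ amounts, via~\eqref{ob} and~\eqref{a-alpha}, to flipping the signs of those~$a_{ij}$ with exactly one of~$i,j$ in~$K$; under this operation $D_u$ is sent to~$(D_u)_K$, and the lift $D_u+(D_u)_v$ (where $v$ is the vertex missing from~$t$) becomes $(D_u)_K+(D_u)_{K\triangle\{v\}}$. Second, the sign in the $D_t^\pm$ component is determined by the specific sign pattern of~$f_K^{(t)}$ and can be read off directly. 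Carrying this out for the four pairs $(K,t)=(\{1\},1345)$, $(\{1,2\},1245)$, $(\{1,2,3\},1235)$, $(\{1,2,3,4\},1234)$ produces the other four bulleted formulas; the identifications $(D_u)_{1234}=(D_u)_5$ and $(D_u)_{12345}=D_u$ used in the fifth bullet reflect that $(D_u)_K$ depends only on the unordered partition $\{K,K^c\}$ of the vertex set.

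With the five bullets in hand, the divisor of~\eqref{fff/ff} follows by mechanical addition: the three numerator factors contribute $2D_u+(D_u)_1+(D_u)_{12}+(D_u)_{123}+(D_u)_5+D_{2345}^-+D_{1245}^-+D_{1234}^-$, and the two denominator factors contribute $(D_u)_1+(D_u)_{12}+(D_u)_{123}+(D_u)_5+D_{1345}^++D_{1235}^+$. All four $(D_u)_K$ terms cancel between numerator and denominator, leaving exactly~\eqref{MD}. I expect the main obstacle to be the lifting step --- showing that the trigonometric component~\eqref{2345d} extends on~$\mathcal M'$ to precisely the two components $D_u$ and~$(D_u)_1$ --- since everything else is either parallel to the proof of Theorem~\ref{th:zp1} or is routine sign bookkeeping.
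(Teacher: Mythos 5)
Your proposal is correct and follows essentially the paper's own route: the paper likewise derives each bullet as a ``simple variation on the theme'' of Lemma~\ref{l:sing} (the one-tetrahedron primary decomposition, with the coordinate components discarded on~$\mathcal M'$ and Convention~\ref{cnv:aq} accounting for the passage to the $q$-variables and for the sign flips defining~$f_K^{(t)}$), and then gets~\eqref{MD} by adding and subtracting the resulting divisors. The single step you flag as the main obstacle --- that the tetrahedral trigonometric component~\eqref{2345d} lifts on~$\mathcal M'$ to exactly $D_u+(D_u)_1$ (and its sign-flipped analogues for the other bullets) --- is precisely what the paper leaves implicit in that appeal to computer-algebra ``variations'', so your account is at least as complete as the published proof.
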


\begin{proof}
The formulas for divisors of the first five functions make a simple variations on the theme of Lemma\ref{l:sing}, where, of course, Convention~\ref{cnv:aq} must be also taken into account. Then, \eqref{MD} follows by adding/subtracting relevant divisors.
\end{proof}

Motivated by Lemma~\ref{l:sigma}, we divide the expression~\eqref{fff/ff} by
\[
\sigma_{2345}\sigma_{1245}\sigma_{1234} = (\omega_{234}-\omega_{235})(\omega_{124}-\omega_{125})(\omega_{123}-\omega_{124}).
\]

\begin{theorem}\label{th:vphi}
The divisor of the so obtained expression
\begin{equation}\label{varphi}
\varphi_{12345}=\frac{f^{(2345)}f_{12}^{(1245)}f^{(1234)}} {\sigma_{2345}\sigma_{1245}\sigma_{1234}f_1^{(1345)}f_{123}^{(1235)}},
\end{equation}
considered as a function on~$\mathcal M'$, is
\begin{equation}\label{Dvphi}
2D_u - \sum_{t\subset u} D_t^+ .
\end{equation}
\end{theorem}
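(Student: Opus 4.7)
The proof is essentially a bookkeeping combination of the two preceding lemmas. My plan is to start from the divisor formula \eqref{MD} already established in Lemma~\ref{l:d} for the ratio~\eqref{fff/ff}, and then track how dividing by $\sigma_{2345}\sigma_{1245}\sigma_{1234}$ modifies this divisor on $\mathcal M'$.

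First I would compute the divisor of a single $\sigma_t$ on $\mathcal M'$. Writing $t=ijkl$ with distinguished edge~$ij$, definition~\eqref{sigma} together with~\eqref{ob} gives
\[
\sigma_t = \omega_{ijk}-\omega_{ijl} = a_{ij}\bigl(a_{ik}a_{jk}-a_{il}a_{jl}\bigr).
\]
Since we are on~$\mathcal M'$, where $a_{ij}\ne 0$ by \eqref{M'}, the factor $a_{ij}$ contributes trivially to the divisor; Lemma~\ref{l:sigma} then identifies the divisor of the remaining factor $\sigma_t/a_{ij}$ as $D_t^- + D_t^+$. Hence on~$\mathcal M'$,
\[
\operatorname{div}(\sigma_t) = D_t^- + D_t^+
\qquad\text{for each tetrahedron }t\subset u.
\]

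Next I would simply subtract the three divisors for $t=2345,\,1245,\,1234$ from~\eqref{MD}. The $+D_t^-$ contributions present in~\eqref{MD} for exactly these three tetrahedra cancel against the corresponding $-D_t^-$ contributions produced by the $\sigma_t$'s, while I pick up a new $-D_t^+$ for each of these three $t$'s. Combined with the $-D_{1345}^+$ and $-D_{1235}^+$ already appearing in~\eqref{MD}, this yields
\[
2D_u - D_{2345}^+ - D_{1345}^+ - D_{1245}^+ - D_{1235}^+ - D_{1234}^+ \;=\; 2D_u - \sum_{t\subset u} D_t^+,
\]
which is~\eqref{Dvphi}.

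There is no real obstacle here: everything reduces to arithmetic on Weil divisors on~$\mathcal M'$, and all nontrivial geometric input has already been supplied by Lemmas~\ref{l:d} and~\ref{l:sigma}. The only point deserving a line of justification is the reduction $\operatorname{div}(\sigma_t)=\operatorname{div}(\sigma_t/a_{ij})$, which holds precisely because~$\mathcal M'$ excludes the loci $a_{ij}=0$.
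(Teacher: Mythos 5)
Your proposal is correct and follows essentially the same route as the paper, whose proof is the one-line observation that the result follows from~\eqref{MD} and Lemma~\ref{l:sigma}; you have merely spelled out the divisor arithmetic, namely that $\operatorname{div}(\sigma_t)=D_t^-+D_t^+$ on~$\mathcal M'$ (since $a_{ij}\ne 0$ there) and that subtracting these three divisors from~\eqref{MD} cancels the $D_t^-$ terms and produces $2D_u-\sum_{t\subset u}D_t^+$.
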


\begin{proof}
This follows from~\eqref{MD} and Lemma~\ref{l:sigma}.
\end{proof}

The symmetry of divisor~\eqref{Dvphi} suggests the following theorem.

\begin{theorem}\label{th:v}
Function $\varphi_{12345}$~\eqref{varphi} remains the same, to within a possible sign change, under any permutation of vertices $1,\dots,5$ (the sequences of vertices, both in subscripts and superscripts, are then ordered, according to Convention~\ref{cnv:o}).
\end{theorem}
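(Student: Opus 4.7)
The plan is to leverage Theorem~\ref{th:vphi}, which gives $\mathrm{div}(\varphi_{12345}) = 2D_u - \sum_{t\subset u}D_t^+$ on $\mathcal{M}'$. Since $S_5$ is generated by the adjacent transpositions $\tau_k=(k,k{+}1)$, $k=1,\ldots,4$, it suffices to verify $\tau_k^*\varphi_{12345}=\pm\varphi_{12345}$ for each such $\tau_k$; the statement for general $\pi\in S_5$ follows by composing the transpositions and multiplying the signs.

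For each $\tau_k$, the first observation is that $D_u$ is manifestly $\tau_k$-invariant: by Lemma~\ref{l:psi}, $D_u$ is characterized by the condition that $\psi_s/\omega_s$ is the same for all 2-faces $s\subset u$, and this system is symmetric in the vertex labels. For the divisor $\sum_{t}D_t^+$, Lemma~\ref{l:sigma} identifies $D_t^++D_t^-$ as the zero divisor of the explicit function $\sigma_t/a_{ij_t}$, which is principal. Consequently, although the lex-first distinguished edge of $t$ need not be preserved by $\tau_k$ (so $\tau_k^*D_t^+$ need not equal $D_{\tau_k(t)}^+$ on the nose), the unordered pair $\{D_t^+,D_t^-\}$ transforms to $\{D_{\tau_k(t)}^+,D_{\tau_k(t)}^-\}$ modulo a principal divisor explicitly built from the $\sigma_t$'s, and hence $\tau_k^*D_t^+$ equals either $D_{\tau_k(t)}^+$ or $D_{\tau_k(t)}^-$ modulo this principal correction.

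Combining these two ingredients, $\mathrm{div}(\tau_k^*\varphi_{12345})$ and $\mathrm{div}(\varphi_{12345})$ agree on~$\mathcal{M}'$ modulo an explicitly known principal divisor, so the ratio $\tau_k^*\varphi_{12345}/\varphi_{12345}$ is a specific rational function that I can write down. Because $\tau_k$ is an involution, this ratio squares to $1$ as a function on~$\mathcal{M}'$; alternatively, evaluating at a generic point of~$D_u$ — where the trigonometric parameterization~\eqref{tg} turns both sides into elementary expressions in the~$x_i$ — pins down the value. Either argument forces the ratio to be $\pm 1$.

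The main obstacle will be the bookkeeping in the second step: for each of the four adjacent transpositions and each of the five tetrahedra, I need to determine the exact sign $\epsilon_{k,t}\in\{\pm\}$ in the identification $\tau_k^*D_t^+ \equiv D_{\tau_k(t)}^{\epsilon_{k,t}}$ (modulo principal divisors), and then check that the accumulated signs across all five tetrahedra combine with the $\sigma_t/a_{ij_t}$-corrections to yield a \emph{global} constant $\pm 1$, rather than a nontrivial monomial in the $a_{ij}$. The combinatorial payoff of restricting to adjacent transpositions is that $\tau_k$ fixes at least three tetrahedra of~$u$ and swaps at most two, keeping the case analysis short.
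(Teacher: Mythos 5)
Your strategy is genuinely different from the paper's: the paper disposes of Theorem~\ref{th:v} by a direct calculation in the variables $q_{ijk}$, while you try to deduce it from the symmetry of the divisor~\eqref{Dvphi} established in Theorem~\ref{th:vphi}. The divisor bookkeeping itself (adjacent transpositions, invariance of $D_u$ via Lemma~\ref{l:psi}, tracking $D_t^{\pm}$ via Lemma~\ref{l:sigma}) is plausible and could be carried through. But there is a genuine gap at the decisive step: $\mathcal M'$ is a quasi-affine variety from which all loci $q_{ijk}=0$ have been removed, so every $q_{ijk}$ is an invertible regular function on it. Hence two rational functions on $\mathcal M'$ with equal divisors need not differ by a constant — their ratio is only a unit of $\mathcal O(\mathcal M')$, e.g.\ possibly a monomial of total degree zero such as $q_{134}/q_{135}$. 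Equality of divisors (even exactly, with all your principal corrections accounted for) therefore does not by itself reduce the claim to determining a single number $\pm 1$.

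Neither of your two proposed ways to close this works as stated. The involution argument gives $c\cdot\tau_k^{*}c=1$ for the ratio $c=\tau_k^{*}\varphi_{12345}/\varphi_{12345}$, not $c^{2}=1$; a non-constant unit like $q_{134}/q_{135}$ under the transposition $(45)$ satisfies the former identity without being $\pm1$, so constancy is exactly what is not delivered. And evaluating ``at a generic point of $D_u$'' is not an elementary evaluation: by~\eqref{Dvphi} both $\varphi_{12345}$ and its pullback vanish there to order two, so you would have to compare leading coefficients along $D_u$; moreover, even a legitimate evaluation at one point only fixes the value of the unit at that point and cannot rule out a non-constant unit. To finish along your lines you would need an extra argument controlling the unit group contribution (for instance, an explicit computation of the ratio, or a finer grading/valuation argument at the removed loci $q_{ijk}=0$), and at that stage the work collapses back into the kind of direct symbolic verification that the paper invokes in one line. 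So the approach is an interesting alternative framing, but as written it does not prove the theorem.
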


\begin{proof}
Direct calculation.
\end{proof}

Function~$\varphi_{12345}$ is thus an interesting highly symmetric function of variables~$q_{ijk}$ belonging to the pentachoron~$u=12345$ and obeying the restrictions~\eqref{qbd}.

\section{The poles and zeros of the coefficient in 3--3 relation, and its explicit form}\label{s:c}

We now pass from the single pentachoron~$12345$ to Pachner move 3--3, where six pentachora are involved.

The l.h.s.\ and r.h.s.\ of move 3--3 are triangulated manifolds with boundary. We can orient the pentachora in both these manifolds consistently, and also so that these orientations induce the same orientation on the boundary $\partial(\text{l.h.s.})=\partial(\text{r.h.s.})$. For one such orientation (of two), the signs in the following table show whether this consistent orientation of a pentachoron coincides with the orientation given by the natural order of its vertices: 
\begin{equation}\label{p-ori}
\begin{array}{ccccccc} \multicolumn{3}{c}{\text{left-hand side}} && \multicolumn{3}{c}{\text{right-hand side}} \\ 12345 & 12346 & 12356 && 12456 & 13456 & 23456 \\ + & - & + && + & - & + \end{array}
\end{equation}

We do now all calculations in terms of variables~$q_{ijk}$ and not~$a_{ij}$. This is due to the following important remark.

\begin{remark}
Variables~$a_{ij}$ depend on a pentachoron (i.e., two~$a_{ij}$ for the same edge~$ij$, but calculated within two different pentachora containing this edge, are different), while $q_{ijk}$ do not.
\end{remark}

\subsection{Matrix elements for all six pentachora involved in move 3--3}\label{ss:allp}

In Section~\ref{s:e}, we explained how to calculate matrix~$F$ elements for pentachoron~$12345$. For \emph{any} pentachoron~$ijklm$ (recall that $i<\dots<m$, according to Convention~\ref{cnv:o}), the obvious substitution $1\mapsto i,\dots,5\mapsto m$ must be made. Besides this, the sign of matrix element must be changed for the pentachora marked with minus sign in table~\eqref{p-ori}, as we are going to explain in (the proof of) Lemma~\ref{l:m}, where we study the way how our normalization of edge operators, given by Convention~\ref{cnv:en}, propagates from one pentachoron to another.

\begin{lemma}\label{l:m}
Expression~\eqref{s-gen} can be normalized to unity for a whole oriented triangulated manifold.
\end{lemma}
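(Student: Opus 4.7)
My plan is to reduce the global problem to the local one already solved by Lemmas~\ref{l:s}, \ref{l:t} and~\ref{l:u}, and then to glue across pentachora by exploiting the fact that the edge operators of different pentachora are essentially independent objects. First, I would fix a consistent orientation of the manifold, inducing coherent orientations on all pentachora, and hence on boundary tetrahedra and their 2-faces. Within each pentachoron $u$, Lemmas~\ref{l:t} and~\ref{l:u} already guarantee that the value of~\eqref{s-gen} is the same for every admissible triple $(t, s, b_1, b_2)$ with $t \subset \partial u$ carrying its induced orientation. Convention~\ref{cnv:en} then amounts to rescaling the edge operators $d_b$ attached to~$u$ so that this common value becomes~$1$.

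The second step is to verify that these pentachoron-wise normalizations are mutually compatible, which I expect to be nearly automatic. For an edge $b$ shared by two pentachora $u_1$ and $u_2$, the operators denoted $d_b$ in each of the two pentachora are distinct objects: each is built out of the variables $(\partial_t, x_t)$ with $t \subset u_i$ and annihilates only its own weight $\mathcal{W}_{u_i}$. Thus rescaling the operators inside $u_1$ has no effect on those inside $u_2$, and running Convention~\ref{cnv:en} pentachoron by pentachoron produces a global normalization for which~\eqref{s-gen} equals~$1$ in every pentachoron simultaneously.

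The step that I expect to be the most delicate is the orientation bookkeeping, which is also the content promised in the paragraph preceding the lemma. For a pentachoron $u = ijklm$ whose manifold-induced orientation disagrees with the natural vertex-order orientation --- the ``$-$'' entries in table~\eqref{p-ori} --- the induced orientations on boundary tetrahedra and their 2-faces are reversed relative to what the formulas from Section~\ref{s:e} produce under the substitution $1 \mapsto i,\dots,5 \mapsto m$. This flips the sign of $\omega_s$ in~\eqref{s-gen}, which then propagates through the construction of the superisotropic operators in Subsection~\ref{ss:si} and, ultimately, into the matrix elements of~$F$. The main task of the proof will be to track how all of these sign flips combine into the single overall sign that must be inserted for each ``$-$'' pentachoron, and to check that, once this sign has been absorbed, the globally normalized value of~\eqref{s-gen} is consistently~$+1$ and the oriented quantity truly depends only on the oriented triangle $s$, independent of which containing pentachoron was used to evaluate it.
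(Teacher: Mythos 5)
There is a genuine gap, and it sits exactly where you declare compatibility to be ``nearly automatic.'' The edge operators attached to two pentachora $u_1$, $u_2$ sharing a tetrahedron $t=u_1\cap u_2$ are \emph{not} independent objects that can be rescaled separately: the Grassmann variable $x_t$ is one and the same in both weights, and the construction requires the restrictions to $t$ of the two edge operators for a common edge $a\subset t$ to agree up to the sign flip of the $x_t$-part, i.e.\ relation~\eqref{agree-t} (formulas~(58) of~\cite{full-nonlinear}). This is the key ingredient of the paper's proof and it is absent from yours. Once the operators of $u_1$ have been normalized by Convention~\ref{cnv:en}, relation~\eqref{agree-t} pins down the $t$-components of the corresponding operators of $u_2$, so their overall scale is no longer free; the nontrivial content of the lemma is that the value of~\eqref{s-gen} computed in $u_2$ with this \emph{forced} normalization is again~$1$. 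The paper checks this by observing two simultaneous sign changes that cancel: passing from $\partial u_1$ to $\partial u_2$ reverses the induced orientation of $t$, hence of its 2-face $s$ and of $\omega_s$, while~\eqref{agree-t} ($\gamma_t\mapsto-\gamma_t$) reverses the sign of the partial scalar product $\langle d_{b_1},d_{b_2}\rangle_t$. If the operators really could be normalized pentachoron by pentachoron with no interaction, the lemma would be vacuous and the subsequent sign prescription for matrix elements (Subsection~\ref{ss:allp}, table~\eqref{p-ori}) would have no canonical meaning.

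Your third paragraph gestures at the right kind of bookkeeping but defers it (``the main task of the proof will be to track\dots'') and, moreover, aims at the wrong comparison: the vertex-order orientation versus the manifold orientation of a \emph{single} pentachoron, which is the remark made \emph{after} the lemma about signs of matrix elements. The comparison the proof actually needs is between the two evaluations of~\eqref{s-gen} on the \emph{shared} tetrahedron of two adjacent pentachora, and that comparison cannot be carried out without~\eqref{agree-t}.
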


\begin{proof}
Let tetrahedron~$t$ be the common 3-face of two pentachora, $t=u_1\cap u_2$. Let $a\subset t$ be its edge, and $d_a^{(u_1)}$ and~$d_a^{(u_1)}$ --- the corresponding edge operators in our two pentachora. Then,
\begin{equation}\label{agree-t}
\begin{array}{rc}
 \text{if } & d_a^{(u_1)}|_t=\beta_t\partial_t+\gamma_t x_t, \\
 \text{then   } & d_a^{(u_2)}|_t=\beta_t\partial_t-\gamma_t x_t,
\end{array}
\end{equation}
see~\cite[formulas~(58)]{full-nonlinear}).

We see now that, on passing to a neighboring pentachoron, first, the orientation of~$t$ changes (and this affects the orientation of~$s$ in\eqref{s-gen}!), and second --- partial scalar products of edge operators also change their signs because of~\eqref{agree-t}. Hence, the quantity~\eqref{s-gen} remains the same, as before in Lemmas \ref{l:s}, \ref{l:t} and~\ref{l:u}. This means that it pertains to the whole triangulated manifold, if it is orientable and connected. Hence, we can normalize all edge operators globally so that quantity~\eqref{s-gen} stays always equal to unity.
\end{proof}

And it is clear from~\eqref{Ftt'} that, indeed, changing the sign of partial scalar products implies changing the sign of matrix elements.

\subsection{Components in the l.h.s.\ and r.h.s., their poles and zeros}\label{ss:pz}

Triple integrals in~\eqref{33} are polynomials in Grassmann variables, and their coefficients are proportional. A typical coefficient, namely one at~$x_{1245}$ (the Grassmann variable corresponding to tetrahedron~$1245$), is
\begin{equation}\label{L}
L = F_{1234,1236}F_{1235,1245}-F_{1235,1236}F_{1234,1245}.
\end{equation}
in the l.h.s., and
\begin{equation}\label{R}
R = F_{1456,3456}F_{2456,1245}-F_{2456,3456}F_{1456,1245}.
\end{equation}
in the r.h.s.

\begin{remark}
Two tetrahedra in the subscripts of a matrix element in \eqref{L} or~\eqref{R} clearly determine the relevant pentachoron.
\end{remark}

Our goal is now to guess the form of $c_l$ and~$c_r$. As we can then check the correctness of our guess with a direct computer calculation, informal reasoning will be quite enough for us at this moment.

So, we analyze poles and zeros of $L$, $R$, and other similar Grassmann polynomial coefficients, in order to invent such $c_l$ and~$c_r$ that will compensate these poles and zeros. First, we do so assuming that no one of values~$q_{ijk}$ vanish, that is, within the `global analogue' of set~$\mathcal M'$~\eqref{M'}. The poles of at least one component in the l.h.s.\ are relevant, while the zeros must be \emph{common} for all components; similarly for r.h.s. We see this way that the poles are situated on divisors~$D_u$ (see Subsection~\ref{ss:d}) for \emph{all} pentachora in the relevant side of Pachner move, while the zeros are situated on divisors~$D_t^+$ of all \emph{inner} tetrahedra, again in the relevant side of Pachner move.

\subsection{Fitting the divisors, and the formulas for $c_l$ and~$c_r$}\label{ss:fit-div}

The above analysis of poles and zeros of triple Berezin integrals in~\eqref{33}, when confronted with the divisor~\eqref{Dvphi} of function~$\varphi_{12345}$, suggests that \emph{square roots} of such functions may be the key ingredient of our $c_l$ and~$c_r$. So, we introduce, in analogy with~$\varphi_{12345}$, quantities~$\varphi_u$ for each pentachoron~$u$ (simply making relevant subscript substitutions).

Now we look at what may happen where some $q_{ijk}$ do vanish. Motivated by the products of~$q_{ijk}$ factored out in the denominators of expressions like~\eqref{F12}, we introduce the quantities
\begin{equation}\label{varrho}
\varrho_t=q_{ijk}q_{ijl}
\end{equation}
for tetrahedra $t=ijkl$. These~$\varrho_t$ are expected to compensate the poles appearing where the mentioned denominators vanish.

\begin{remark}
Note that 2-faces $ijk$ and~$ijl$ in~\eqref{varrho} both contain the distinguished edge~$ij\subset t$, see Subsection~\ref{ss:si}.
\end{remark}

What remains is a bit more guessing, trying, scrutinizing formulas --- and we arrive at the following theorem.

\begin{theorem}\label{th:e}
For the 3--3 relation~\eqref{33} to hold, it is enough to set its left-hand-side coefficient to
\begin{equation}\label{cl}
c_l = \varrho_{1234}\, \varrho_{1235}\, \varrho_{1236}\, \sqrt{\varphi_{12345}}\, \sqrt{\varphi_{12346}}\, \sqrt{\varphi_{12356}}\,/\,q_{123}\, ,
\end{equation}
and its right-hand-side coefficient to
\begin{equation}\label{cr}
c_r = \varrho_{1456}\, \varrho_{2456}\, \varrho_{3456}\, \sqrt{\varphi_{12456}}\, \sqrt{\varphi_{13456}}\, \sqrt{\varphi_{23456}}\,/\,q_{456}\, .
\end{equation}
\end{theorem}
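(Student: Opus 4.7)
My approach would use the divisor-matching strategy prepared in Sections~\ref{s:d}--\ref{s:c}, combined with a single evaluation at a generic point. Since the 3-3 relation~\eqref{33} is a proportionality of Grassmann polynomials (after carrying out the three Berezin integrations on each side), and the existence of \emph{some} scalar ratio $c_r/c_l$ making~\eqref{33} hold is already known from~\cite{full-nonlinear}, it is enough to verify $c_l L = c_r R$ for the single coefficient pair $L$, $R$ of~\eqref{L}--\eqref{R}; equality at this one Grassmann monomial automatically forces equality at every other monomial.

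First I would compute, on the global version of~$\mathcal M'$, the divisors of $L$ and $R$ using Theorem~\ref{th:zp1} together with Remark~\ref{r:zp1}. Each matrix element $F_{tt'}$ in~\eqref{L} contributes a pole on the divisor $D_u$ of its containing pentachoron and zeros along the appropriate $(D_u)_K$, $D_t^+$ and $D_{t'}^+$; summing the two terms of~\eqref{L} yields a pole divisor $D_{12345}+D_{12346}+D_{12356}$ together with the common zeros $D_{1234}^+ + D_{1235}^+ + D_{1236}^+$ forced by the reasoning of Subsection~\ref{ss:pz}. Multiplying by $c_l$ from~\eqref{cl} and invoking Theorem~\ref{th:vphi}, each factor $\sqrt{\varphi_u}$ contributes divisor $D_u - \tfrac12 \sum_{t\subset u} D_t^+$, precisely cancelling the three pole divisors and supplying the half of each inner $D_t^+$-zero divisor that belongs to the left side. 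The mirror calculation on the right-hand side produces the same divisor, so $c_l L$ and $c_r R$ agree on~$\mathcal M'$ up to an overall constant.

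The prefactors $\varrho_t = q_{ijk}q_{ijl}$ and the denominators $q_{123}$, $q_{456}$ in~\eqref{cl}--\eqref{cr} handle the complementary locus where some $q_{ijk}=0$: each $\varrho_t$ absorbs the $q$-monomials appearing as an overall scalar in the denominators of the matrix elements (visible in the factored prefactor of~\eqref{F12}), while the residual $q_{123}$ (resp.\ $q_{456}$) dangles because the three left-side (resp.\ right-side) pentachora all share the triangle~$123$ (resp.~$456$). Once all divisors match globally, the remaining overall constant --- a sign, by the homogeneity of everything --- is pinned down by a single substitution into the trigonometric parameterization~\eqref{tg}.

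The main obstacle I anticipate is keeping the signs coherent across the six pentachora. The quantities $\sqrt{\varphi_u}$ are determined only up to sign, and flipping the sign in one pentachoron flips $c_l$ or $c_r$ globally; the orientation-sensitive table~\eqref{p-ori} together with Lemma~\ref{l:m} governs how our normalization of edge operators propagates between neighbouring pentachora, and thus dictates the correct sign assignment for each $\sqrt{\varphi_u}$. Once these sign conventions are fixed consistently, the entire verification reduces to the Singular-level computer-algebra check alluded to at the end of Subsection~\ref{ss:pz}.
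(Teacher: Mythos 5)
Your proposal follows essentially the paper's own route: the divisor-matching analysis you describe is exactly the heuristic the author uses in Sections~\ref{s:d}--\ref{s:c} to \emph{guess} \eqref{cl} and~\eqref{cr}, and the paper's actual proof of Theorem~\ref{th:e}, like your final step, is nothing more than a direct computer-algebra verification. Be aware, though, that the intermediate steps you state as if they were rigorous are not quite: matching divisors on~$\mathcal M'$ only determines the ratio up to a unit of~$\mathcal M'$ (i.e.\ up to a monomial in the~$q_{ijk}$ times a constant, which is precisely why the factors $\varrho_t$ and the ``small miracle'' denominators $q_{123}$, $q_{456}$ cannot be derived from the divisor picture and are only confirmed by the computation), and the locus~\eqref{tg} is itself a pole/zero locus, so the final constant cannot be fixed by substituting there --- so the burden of proof really does rest on the direct calculation, just as in the paper.
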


\begin{proof}
Direct calculation.
\end{proof}

A small miracle in formulas \eqref{cl} and~\eqref{cr} is the denominators $q_{123}$ and~$q_{456}$, appearing because exactly such a value factors out in a non-obvious way in the numerator of $L$ or~$R$ during the reduction to common denominator.

\subsection*{Acknowledgements}

I thank Evgeniy Martyushev for his interest in this work. He made also some interesting calculations; although they are not used directly in this paper, they showed me some beautiful things and added thus to my inspiration.

I am also grateful to creators and maintainers of Maxima and Singular computer algebra systems for their great work.

\end{document}